\documentclass[acmsmall]{acmart}

\usepackage{tikz}
\usepackage{amsmath}

\usepackage{listings}
\usepackage{algorithm}
\usepackage[rightComments=true, indLines=true]{algpseudocodex}
\usepackage{graphicx}
\usepackage{textcomp}
\usepackage{xcolor}
\usepackage{subcaption}
\usepackage[font=small, skip=4pt]{caption}
\usepackage{circledsteps}
\usepackage{multirow}
\usepackage{balance}
\usepackage[subtle, lists=tight, mathdisplays=tight]{savetrees}
\usepackage{pdfpages}
\usepackage{circledsteps}

\renewcommand\footnotetextcopyrightpermission[1]{} 

\AtBeginDocument{%
  }

\setcopyright{cc}
\setcctype{by}
\acmJournal{PACMMOD}
\acmYear{2026} \acmVolume{4} \acmNumber{2 (SIGMOD)} \acmArticle{175}
\acmMonth{6} \acmDOI{10.1145/3802052}

\newcommand{\scmacro}{{\itshape Minerva}}
\algrenewcommand\textproc{}
\begin{document}

\title{Epoch-based Optimistic Concurrency Control \\in Geo-replicated Databases}

\renewcommand{\shorttitle}{Epoch-based Optimistic Concurrency Control in Geo-replicated Databases}

\author{Yunhao Mao}
\email{yunhao.mao@mail.utoronto.ca}
\affiliation{%
  \institution{University of Toronto}
  \city{Toronto}
  \country{Canada}
}

\author{Harunari Takata}
\email{tkt0430@keio.jp}
\affiliation{%
  \institution{Keio University}
  \country{Japan}
}

\author{Michail Bachras}
\email{michalis.bachras@mail.utoronto.ca}
\affiliation{%
  \institution{University of Toronto}
  \city{Toronto}
  \country{Canada}
}

\author{Yuqiu Zhang}
\email{quincy.zhang@mail.utoronto.ca}
\affiliation{%
  \institution{University of Toronto}
  \city{Toronto}
  \country{Canada}
}

\author{Shiquan Zhang}
\email{shiquan.zhang@mail.utoronto.ca}
\affiliation{%
  \institution{University of Toronto}
  \city{Toronto}
  \country{Canada}
}

\author{Gengrui Zhang}
\email{gengrui.zhang@concordia.ca}
\affiliation{%
  \institution{Concordia University}
  \city{Montreal}
  \country{Canada}
}

\author{Hans-Arno Jacobsen}
\email{jacobsen@eecg.toronto.edu}
\affiliation{%
  \institution{University of Toronto}
  \city{Toronto}
  \country{Canada}
}

\renewcommand{\shortauthors}{Y. Mao et al.}

\begin{abstract}
Achieving high-performance transaction processing in geo-replicated OLTP databases is challenging due to the extensive over-coordination in distributed atomic commitment, concurrency control, and fault-tolerant replication protocols. To address this issue, we introduce \scmacro{}, a unified distributed concurrency control protocol designed for highly scalable \textit{multi-leader} replication. \scmacro{} employs a novel epoch-based asynchronous replication protocol that decouples data propagation from the commitment process, enabling continuous transaction replication. Optimistic concurrency control is used to allow replicas to execute transactions concurrently and to commit without coordination. For conflict detection during validation, we construct a conflict graph and use a maximum weight independent set search algorithm to select the optimal subset of non-conflicting transactions for commitment, minimizing the number of invalid transactions. Finally, we \textit{deterministically re-execute} conflicting transactions, ensuring serializability while eliminating aborts. Our evaluation demonstrates that \scmacro{} outperforms state-of-the-art replicated databases, achieving over $3\times$ higher throughput in scalability experiments and $2.8\times$ higher throughput in a high-latency network simulation with the TPC-C benchmark.

\end{abstract}

\begin{CCSXML}
<ccs2012>
   <concept>
       <concept_id>10002951.10002952.10003190.10003195</concept_id>
       <concept_desc>Information systems~Parallel and distributed DBMSs</concept_desc>
       <concept_significance>500</concept_significance>
       </concept>
   <concept>
       <concept_id>10002951.10002952.10003190.10010832</concept_id>
       <concept_desc>Information systems~Distributed database transactions</concept_desc>
       <concept_significance>500</concept_significance>
       </concept>
   <concept>
       <concept_id>10002951.10002952.10003190.10003193</concept_id>
       <concept_desc>Information systems~Database transaction processing</concept_desc>
       <concept_significance>500</concept_significance>
       </concept>
 </ccs2012>
\end{CCSXML}

\ccsdesc[500]{Information systems~Parallel and distributed DBMSs}
\ccsdesc[500]{Information systems~Distributed database transactions}
\ccsdesc[500]{Information systems~Database transaction processing}


\keywords{Distributed transactions; Multi-leader replication; Deterministic concurrency control; Geo-replication}


\settopmatter{printfolios=true}
\maketitle

\section{Introduction}

Geo-replicated distributed databases are widely used in modern large-scale online applications. In this architecture, application data is replicated to multiple data centers across a continent or even globally~\cite{sovran_transactional_2011, corbettSpannerGoogleGlobally2013, s3}, providing redundancy that significantly expands fault-tolerance capabilities, from failures of a single server to failures across data centers or even the Internet infrastructure of an entire region\footnote{For example, Bell, one of Canada's Internet infrastructure providers, caused a province-wide outage in 2025 due to misconfiguration~\cite{belloutage}.}~\cite{fboutage2021, awsoutage2025}. 

Prominent distributed OLTP databases, such as FoundationDB~\cite{zhou2021foundationdb}, TiDB~\cite{huang2020tidb}, or traditional relational databases with replication enabled, such as PostgreSQL~\cite{momjian2001postgresql}, and MySQL~\cite{widenius2002mysql}, usually replicate with a \textit{primary-backup} (single-leader) architecture. In this paradigm, write operations are processed exclusively by a single replica (the \textit{primary}), while secondaries serve as failover backups or read-only sources. Doing so allows simpler concurrency control and replication protocols but presents several limitations: (1) during network partitions, availability is compromised if clients cannot reach the designated primary; (2) the single primary often becomes a performance bottleneck; and (3) primary failure requires a leader change, which is time-consuming and complex with durability guarantees~\cite{kleppmannDesigningDataIntensiveApplications2016}.

To fully leverage the availability advantages of geo-replication, \textit{multi-leader} (active-active, multi-master) replication is a desired approach, where all replicas are able to process both read and write requests, and propagate changes to other replicas. Doing so ensures service availability during failures or network partitions, as the clients can be routed to any surviving replica. Furthermore, clients can be distributed among replicas for load balancing and low-latency access. 

However, achieving high performance in multi-leader transactional geo-replicated databases remains a significant challenge due to the extensive coordination overhead required to ensure ACID properties across replicas~\cite{bailisHAT2013, bailis_coordination_2014, zhang_tapir_2018, maiyyaCNC2019, mu_consolidating_2016, thomsonCalvinFastDistributed2012, fanOceanVistaGossipbased2019}. Specifically, the system must enforce strong consistency and coordinate global concurrency control to prevent conflicting updates. These synchronization protocols often require multi-round communication, where each round trip incurs significant wide-area network (WAN) latency, ranging from 80 to 100 ms across a continent (e.g., within North America) to over 200 ms for intercontinental connections (e.g., the US to Australia)~\cite{bailisHAT2013}.

Consequently, multi-leader replication is mostly seen in eventually consistent data stores, such as DynamoDB~\cite{awsdynamo2012}, Cassandra~\cite{lakshmanCassandraDecentralizedStructured2010a}, ScyllaDB~\cite{SycallaDB},  Redis~\cite{ActiveActiveGeodistributedRedis}, Riak~\cite{Riak}, and Anna~\cite{wuAnnaKVSAny2021b}, because weaker consistency requires less coordination~\cite{bailisHAT2013}. However, the relaxation of consistency guarantees means that full ACID transactions are no longer supported. As a result, developers must carefully reason about application correctness in the presence of stale or inconsistent reads.

To address these challenges, we present \scmacro{}, a novel distributed concurrency control protocol designed to deliver \textit{high-performance} and \textit{scalable} transactions that guarantee \textit{serializable isolation} and \textit{crash-fault tolerance} within a \textit{multi-leader} architecture. \scmacro{} is a unified protocol that combines fault-tolerant replication and concurrency control. The strong consistency and multi-leader replication also guarantee atomic commitment of transactions without requiring 2-Phase Commit (2PC). 

The key design features of \scmacro{} are as follows. We first introduce a novel batch-based asynchronous replication protocol designed to achieve scalability under high network latency. In our approach, replicas continuously deliver transactions in batches and establish a write quorum. Decoupled from this data path, a coordinator periodically invokes a Raft-based~\cite{diegoRaft2014} consensus process to confirm \textit{the indices} of the batches to commit. The separation of consensus agreement from bulk data replication significantly enhances system throughput.

To both maximize concurrency and eliminate aborts, \scmacro{} employs a two-phased, coordination-free transaction execution model that combines \textit{optimistic concurrency control (OCC)} and \textit{deterministic execution}. In the first phase, transactions are executed optimistically on local database snapshots, allowing high concurrency without locking. During replication, both transaction inputs (e.g., query parameters) and the write-sets are replicated. Upon commit, replicas independently gather the transactions in the batches, deterministically validate the transactions, then \textit{re-execute} the conflicting (read-write and write-write conflicts) transactions with the same input rather than aborting them, ensuring serializability. This approach is inspired by \textit{deterministic concurrency control}~\cite{thomsonddbv2010,thomsonCalvinFastDistributed2012, abadiOverviewDeterministicDatabase2018, luAriaFastPractical2020}. Every commit constitutes an \textit{epoch} that brings the database state to a new snapshot upon each epoch transition. Thus, the only coordination is the light-weight agreement on batch indices that is run asynchronously from the replication, transaction execution, and validation.

Finally, to minimize the number of re-executed transactions, we construct a \textit{conflict graph}~\cite{bernsteinconcurrency1981} representing the concurrent conflicting transactions and then search for a maximum weight independent set (MWIS) in the graph to get the maximum subset of non-conflicting transactions for commitment. 

We summarize our contributions as follows:
\begin{itemize}
    \item We present \scmacro{}, a distributed concurrency control protocol that enables scalable, fault-tolerant multi-leader transactions.
    \item We propose a novel epoch-based asynchronous replication protocol that decouples data propagation from consensus, achieving high throughput even in high-latency environments.
    \item We design a two-phase coordination-free transaction execution model and an optimal conflict resolution strategy using MWIS to maximize concurrency and minimize aborts while ensuring serializability.
    \item We implement \scmacro{} in a transactional key-value database and conduct thorough evaluations comparing it against state-of-the-art distributed databases. Our results demonstrate that \scmacro{} achieves comparable throughput in a low-latency (local) setting and significantly outperforms other baselines in high-latency and highly scaled scenarios.
\end{itemize}

The remainder of this paper is organized as follows: Section~\ref{sec:bg} provides the necessary background. Section~\ref{sec:design} details \scmacro{}'s algorithm. Section~\ref{sec:impl} describes the implementation of our OLTP database. Section~\ref{sec:eval} presents the experimental evaluation, and Section~\ref{sec:related} reviews related work.

\section{Background and Motivation}
\label{sec:bg}

\subsection{Challenges in Multi-leader Replication}

In a traditional primary-backup architecture, concurrency control, atomic commitment, and replication are typically handled as orthogonal components because writes to a single data item are handled by only one replica (the primary or the leader)~\cite{zhang_tapir_2018}. Concurrency control is only enforced on the primary, replication only ensures consistency between the primary and its followers, and atomic commit only involves the nodes that are participating in the transaction. As a result, coordination overhead remains confined to a small subset of nodes~\cite{kleppmannDesigningDataIntensiveApplications2016}.

In contrast, multi-leader replication allows all replicas to write. Therefore, concurrency control, atomic commitment, and replication must be coordinated globally across all replicas to guarantee serializability. The main challenge lies in the extensive multi-round, often repeated communication required by these protocols. This overhead becomes prohibitively expensive in geo-distributed settings due to high WAN latency~\cite{zhang_tapir_2018,maiyyaCNC2019,mu_consolidating_2016,thomsonCalvinFastDistributed2012,fanOceanVistaGossipbased2019,bailisHAT2013}. Figure~\ref{fig:mmvsms} illustrates the differences between the primary-backup and the multi-leader architecture.

\begin{figure}
    \centerline{\includegraphics[scale=0.7] {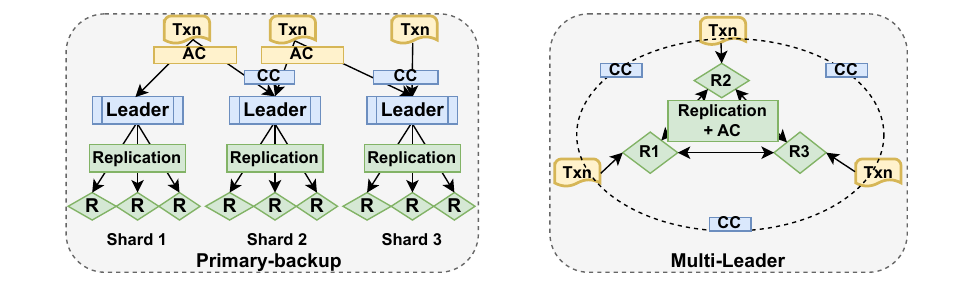}}
    \caption{Primary-backup vs. multi-leader replication. In multi-leader, the layers for atomic commit (\textbf{AC}), concurrency control (\textbf{CC}), and replication among replicas (\textbf{R}) are no longer orthogonal to each other as in primary-backup, leading to redundant communication.}
    \label{fig:mmvsms}
\end{figure}

Next, we detail the requirements for each of these three aspects. First, replicas must maintain strong consistency to prevent anomalies such as \textit{write skew} when read-write transactions are executed on multiple replicas~\cite{bailisHAT2013}. Consider a scenario where transaction $T_1$ writes $X=2$ and commits on one replica, while $T_2$ reads a stale value $X=1$ from another replica and writes $Y=X+1$. If $T_2$ commits before receiving $T_1$'s update, $T_1$ and $T_2$ are not serializable because no serial execution order can produce this result. Preventing such anomalies requires strong consistency, which typically requires synchronous replication via \textit{consensus} protocols~\cite{gilbert2002CAP,lamportPaxosMadeSimple2001,diegoRaft2014}.

Second, concurrency control must be coordinated globally. With a single primary, concurrency control is localized at the primary replica. However, with multi-leader, the system must detect concurrent reads and writes to the same data items when transactions are executed concurrently on different replicas to prevent \textit{conflicting updates}. For example, standard Two-Phase Locking (2PL) requires a transaction to acquire locks on data items across multiple replicas, and thus needs to coordinate~\cite{bernsteinconcurrency1981, cowlingGranolaLowOverheadDistributed2012, balakrishnanTangoDistributedData2013}. Similarly, OCC executes transactions concurrently but requires a synchronized validation phase to detect conflicts and agree on the valid transactions~\cite{kungOptimisticMethodsConcurrency1981, ceriOCC1982, bernsteinconcurrency1981}. Thus, both methods rely on consensus among replicas to guarantee serializability~\cite{thomasianDistributedOptimisticConcurrency1998, ozsuPrinciplesDistributedDatabase2020}.

Finally, atomic commitment protocols, such as 2PC, must be coordinated to ensure all replicas agree on the final outcome (commit or abort) of a transaction. This process again involves coordination overhead comparable to consensus~\cite{grayConsensusTransactionCommit2006}.

\subsection{Toward Unified Protocols}
To mitigate the overheads, prior research has proposed various optimizations designed to reduce communication overhead in distributed protocols, including the sharded primary-back up pattern. These strategies include leveraging inconsistent replication and delegating ordering to an asynchronous layer (e.g., TAPIR~\cite{zhang_tapir_2018}), decoupling replication from atomic commitment (e.g., Ocean Vista~\cite{fanOceanVistaGossipbased2019}), and consolidating distinct protocol phases to minimize communication rounds (e.g., Janus~\cite{mu_consolidating_2016} and the C\&C framework~\cite{maiyyaCNC2019}).

More recently, \textit{deterministic concurrency control} systems, such as Calvin~\cite{thomsonCalvinFastDistributed2012} and Aria~\cite{luAriaFastPractical2020}, have adopted the strategy of replicating transaction \textit{inputs} rather than the resulting states. In this paradigm, replicas execute transactions independently but adhere to a deterministic order. This approach eliminates the need for distributed atomic commitment and concurrency control, as the pre-established ordering guarantees serializability and consistency. By amortizing the coordination overhead across batches of transactions, these systems achieve high scalability. However, this comes at the cost of the inability to support interactive transactions and the complexity in agreeing on the commit point for each batch of transactions.

\subsection{Asynchronous Consensus}
Another critical challenge is to design a scalable, fault-tolerant replication mechanism. Traditional consensus protocols, such as Paxos~\cite{lamportPaxosMadeSimple2001} and Raft~\cite{diegoRaft2014} are synchronous in nature: a value is proposed, agreement is reached, and the value is persisted before the system state advances, and only then can the system advance to the next value. In high-concurrency environments, this causes significant blocking, where the agreement latency for one value delays the replication of all subsequent values~\cite{giridharanautobahn2024}.

Recently, some Byzantine Fault Tolerant (BFT) protocols, such as Narwhal \& Tusk~\cite{danezis2022narwhal} and Autobahn~\cite{giridharanautobahn2024}, address this limitation by decoupling message dissemination from the agreement process. The core idea is to propagate values continuously without blocking for immediate consensus, while a separate, asynchronous process periodically establishes the global order for batches of values using causal dependencies. These approaches offer high throughput consensus, but they are primarily designed for BFT systems. We adapt these BFT-oriented algorithms to be more efficient in a crash fault-tolerant scenario and integrate them into \scmacro{}'s transaction replication mechanism.

\paragraph{Summary}
The approaches discussed above address specific challenges in building scalable, geo-replicated multi-leader databases. However, they target fundamentally different architectures and prioritize distinct trade-offs. Inspired by these predecessors, we aim to create a unified protocol that integrates concurrency control, strongly consistent replication, and atomic commitment. \scmacro{} adopts the "execute-first" pattern of OCC to maximize horizontal scalability and remove coordination during the initial execution. The deterministic validation and commitment phase also allows for a coordination-free commit process. Finally, we leave the only coordination to a high-throughput asynchronous consensus protocol that decouples data propagation from agreement. These design choices collectively enable \scmacro{} to achieve high performance and scalability in geo-distributed multi-leader environments.

\section{\scmacro{}'s Design}
\label{sec:design}
To achieve our performance goals, \scmacro{}'s key design highlights are as follows:

\begin{itemize}
    \item A two-phased concurrency control mechanism that combines optimistic local execution with deterministic re-execution, thereby maximizing concurrency while minimizing abort costs under contention while being coordination-free.
    \item A novel high-throughput, fault-tolerant replication protocol that decouples data propagation from the agreement process, reducing communication overhead in geo-replicated environments.
    \item A graph-based conflict detection algorithm that minimizes transaction aborts during validation by modeling the optimal commit set as the MWIS problem. 
\end{itemize}

Next, we present the system model and assumptions, followed by a detailed description of each phase of the \scmacro{} protocol.

\subsection{System Architecture and Assumptions}
\scmacro{} is designed with geo-distributed environments in mind where each replica is deployed in a distinct data center, forming a replication cluster that maintains a copy of the database, as illustrated in Figure~\ref{fig:arch}. Clients may submit transactions to any replica, which then executes the request and propagates the changes to its peers. Within a single replica, transaction execution threads and storage nodes can be scaled horizontally with more machines. For simplicity, the remainder of this paper treats each replica as a single logical unit and assumes the replication cluster conducts full database replication~\footnote{If one replication cluster only contains a partition of the database and there are multiple clusters, 2PC is required to coordinate commit for cross-partition transactions. This can be achieved using the client as the 2PC coordinator, but it is out of the scope of this work.}.

\begin{figure}
    \centering
    \includegraphics[scale=0.65]{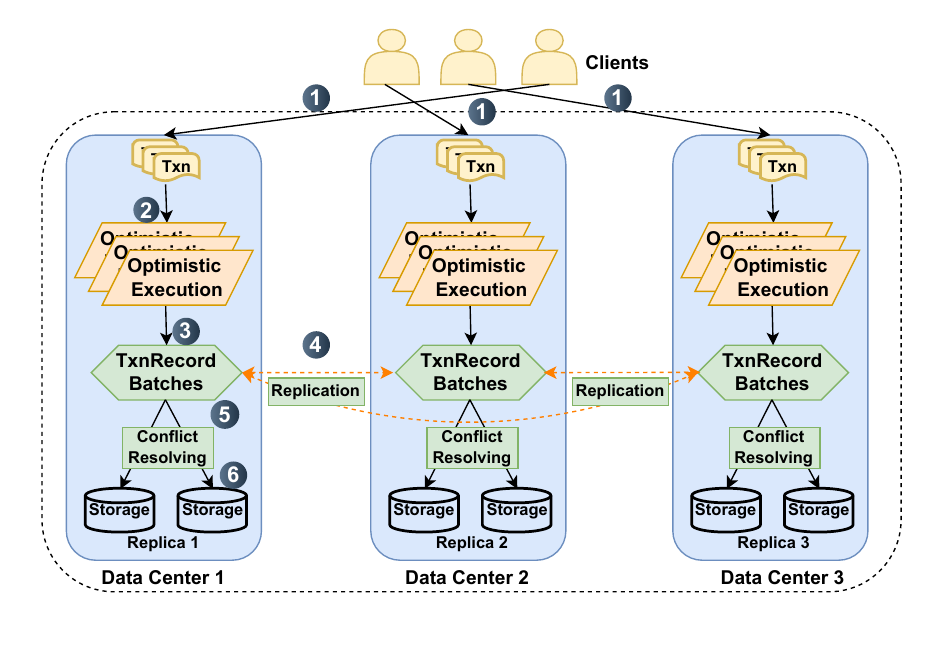}
    \caption{Architecture of \scmacro{}, illustrating the transaction workflow across three replicas. \Circled{1} Clients submit transactions to any replica. \Circled{2} Each replica independently and optimistically executes transactions. \Circled{3} Transaction inputs, read-sets, and write-sets are encapsulated in \textit{TransactionRecords}. \Circled{4} These records are batched and asynchronously propagated to peer replicas. \Circled{5} Periodically, a consensus process is used to decide on the batches to commit, followed by independent conflict resolution and deterministic re-execution. \Circled{6} Committed results are persisted and returned to clients.}
    \label{fig:arch}
\end{figure}

We assume a crash failure model where the replication cluster tolerates up to $f$ failures among $n \ge 2f + 1$ replicas in the cluster. Replicas are considered either correct or crashed. We treat failures at the replica granularity; if any internal component fails, the entire replica is deemed faulty. The communication channel is assumed to be asynchronous but reliable (like TCP): messages may be delayed indefinitely (in case of network partition) but cannot be lost or reordered. To ensure liveness, we assume that during any network partition, a quorum of at least $n-f$ replicas remains connected\footnote{We do not address Byzantine faults in this work. However, our design can be extended to support Byzantine Fault Tolerance (BFT) by integrating asynchronous BFT consensus algorithms, such as Autobahn~\cite{giridharanautobahn2024} or Narwhal~\cite{danezis2022narwhal}.}.

We define a transaction as a sequence of read, write, and computational operations. A transaction's \textit{input} consists of the client request (e.g., SQL query strings or stored procedure parameters), and its \textit{result} is defined as the write-set generated upon execution. We consider one-shot transactions but will discuss the extension to interactive transactions in \S~\ref{sec:ob}. Transactions go through four phases in \scmacro{}:

\begin{enumerate}
    \item \textbf{Snapshot-based Optimistic Local Execution} (\S~\ref{sec:local}): Clients submit requests to any replica based on locality or load. The receiving replica executes the transaction speculatively using a snapshot-based OCC protocol. Operations are performed against a \textit{consistent snapshot}, derived from the database state at the end of the previous epoch, with all write results buffered in a \textit{temporary state}.

    \item \textbf{Asynchronous Batched Replication} (\S~\ref{sec:replication}): 
    Locally executed transactions and their results are placed in a \textit{batch} and broadcast to the cluster. These batches constitute a linear, per-replica log; thus, each replica maintains its own append-only log while tracking the global log status of all replicas. A batch secures a \textit{Proof of Availability (PoA)}  once it has been persisted and acknowledged by a write-quorum of at least $f+1$ replicas. Then, this PoA is broadcast to the cluster to confirm durability.

    \item \textbf{Epoch-based Global Commit} (\S~\ref{sec:commit}):
    A designated \textit{coordinator} periodically initiates a \textit{new epoch} to commit available batches. This is achieved through a lightweight consensus round (e.g., using Raft~\cite{diegoRaft2014} or Paxos~\cite{lamportPaxosMadeSimple2001}) to establish a "consistent cut" (the vector of log heads) across the per-replica logs. The agreement guarantees that all replicas commit the exact same set of batches in the same order. Crucially, unlike traditional synchronous replication, this mechanism does not block the continuous replication of new batches, effectively decoupling the consensus critical path from the data propagation path.

    \item \textbf{Deterministic Validation and Re-execution} (\S~\ref{sec:valid}): 
    At the epoch boundary, each replica \textit{independently} aggregates the referenced batches and performs validation to detect read-write or write-write conflicts. To minimize aborts, a deterministic algorithm identifies the maximum subset of non-conflicting transactions using a MWIS solver. Transactions excluded from the validated set are then re-executed using deterministic locking~\cite{thomsonCalvinFastDistributed2012}. Once the results of both the validated and re-executed transactions are persisted, the database advances to a new \textit{consistent snapshot} for the epoch, and the clients are acknowledged.
\end{enumerate}

\subsection{Optimistic Local Execution}
\label{sec:local}

\begin{algorithm}[h]
    \fontsize{9pt}{10pt}\selectfont
    \caption{Optimistic Transaction Execution at a Local Replica}\label{code:local}
    \begin{algorithmic}[1]
        
        \State var \textbf{\textit{current\_bid}} : int
        \State var \textbf{\textit{current\_batch}} : Batch
        \State var \textbf{\textit{temp\_state}} : DatabaseState $\gets$ Snapshot of the previous epoch
        \Function{ExecuteTransaction}{$txn$}  
            \State OCC read $temp\_state$
            \If{accessed items are last updated by $txn$s in the same epoch} 
                \State add the $txn.tid$s to $txn.dependencies$ \Comment{Dependencies tracking for conflict detection}
            \ElsIf{accessed items last updated in previous epochs}
                \State add the $eid$s to $txn.read\_keys\_eids$
            \EndIf
            \State OCC validation
            \State OCC write $txn.write\_set$ and update item's $tid$ to $temp\_state$
            \State Store $txn.write\_set$, $txn.read\_set$, $txn.dependencies$, $query$ in $TxnRecord$
            \State $current\_batch.append(TxnRecord)$ 
        \EndFunction

        \Function{BroadcastBatchAsync}{}  \Comment{Runs asynchronously}
        \Loop \ periodically or when $current\_batch$ is full
            \State Broadcast($current\_batch$) 
            \State $current\_bid \gets current\_bid + 1$ \Comment{Increment batch ID}
            \State $current\_batch \gets$ new Batch()
            
        \EndLoop
    \EndFunction

    \end{algorithmic}
\end{algorithm}

As shown in Algorithm~\ref{code:local}, transactions perform optimistic local execution while tracking their dependencies. Upon arrival at a replica (the $ExecuteTransaction()$ function), each transaction is assigned a unique identifier ($tid$) and executed concurrently via a standard OCC protocol as if running on an independent database.

Transactions access a temporary state containing uncommitted local changes applied atop the most recent consistent snapshot during read. This snapshot reflects the database state at the conclusion of the previous epoch. The temporary state is reset to the new snapshot at the start of every new epoch.

Data items buffered at the temporary state are tagged with the $tid$ of the updating transaction. This mechanism is used to track \textit{local dependent transactions}, that is, transactions that read data items updated by other transactions within the same epoch. These dependencies are critical for cascading aborts: if a writer transaction ($T_{write}$) is aborted, any dependent reader ($T_{read}$) must also be aborted, as its read-set has become invalid. Conversely, when a transaction reads an unmodified item directly from the consistent snapshot, it records the item's version, denoted by the epoch ID ($eid$) of its last update. This $eid$ is utilized during the global commit phase (\S\ref{sec:commit}) to detect stale reads (i.e., cases where the local snapshot was superseded by concurrent remote updates).

Following execution, the transaction's input, read-set, and write-set are stored in a $TxnRecord$ and appended to the current batch. Upon reaching capacity or after a timeout, the batch is asynchronously broadcast to all replicas ($BroadcastBatchAsync()$). Simultaneously, these batches are persisted into a linear, per-replica \textit{local log}, ensuring availability for the subsequent commit protocol.

\begin{figure}[h]
    \centerline{\includegraphics[scale=0.70] {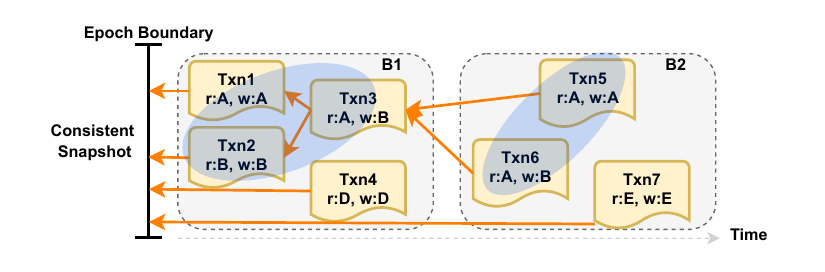}}
    \caption{Transactions and Batches}
    \label{fig:txs}

\end{figure}

Figure~\ref{fig:txs} illustrates the organization of locally executed transactions within the log of batches, where arrows denote data dependencies. In this example, $7$ transactions are distributed across two batches. Transactions $tx1$, $tx2$, $tx4$, and $tx7$ are independent; they operate on disjoint keys and read directly from the consistent snapshot. In contrast, $tx3$, $tx5$, and $tx6$ are dependent transactions, reading values written by uncommitted local operations from the temporary state. Specifically, $tx3$ depends on $tx1$ and $tx2$ (reading data item $A$ and writing to $B$), which implies an execution order where $tx3$ succeeds its predecessors. Similarly, $tx5$ depends on earlier writes. Transaction $tx6$, reading both $A$ and $B$, depends on both $tx5$ and $tx3$. The blue circles encircle these dependency groups, which we designate as \textit{Transaction Chains (TxnC)}. These chains serve as the atomic units for conflict detection during the global validation phase, as discussed in detail in Section~\ref{sec:valid}.

\subsection{Asynchronous Replication}
\label{sec:replication}

\begin{algorithm}[h]
    \fontsize{9pt}{10pt}\selectfont
    \caption{Batch Dissemination}\label{code:broadcast}
    \begin{algorithmic}[1]
        \State var \textbf{\textit{logs}} : \{$rid$ : int -> list of Batches\} \Comment{a map of replica IDs to the list of batches that have been received at the local replica}
        \State var \textbf{\textit{PoA\_heads}} : \{$rid$ : int -> int\}

        \Function{ReceivedBatch}{$batch$ : list, $rid$}  
            \State $logs[rid][batch.bid] \gets batch$
            \State send $ack$ to $rid$
        \EndFunction

        \Function{ReceivedAcknowledgement}{$ack$}
            \State var \textbf{\textit{local\_log}} : list of Batches $\gets logs[self.rid]$
            \State $bid\_acks[ack.bid] \gets bid\_acks[ack.bid] + 1$ \Comment{Assume ack's sender has been checked for duplicates}

            \If{$bid\_acks[ack.bid] \geq f+1$}
                \State $local\_log[ack.bid].available \gets true$ \label{line: available}
                
                \For{$i \gets PoA\_heads[self.rid] + 1$ to $ack.bid$} \Comment{If a batch and its predecessors are confirmed available, broadcast the PoA until the latest available one.} \label{line: poa}
                    \If{$local\_log[i].available$ \&\& $local\_log[i - 1].sent\_poa$} 
                        \State BroadcastPoA($i$) \
                        \State $local\_log[i].sent\_poa \gets true$ 
                    \Else
                        \State break \Comment{No need to check further}
                    \EndIf 
                \EndFor 
            \EndIf
        \EndFunction
        
        \Function{ReceivedPoA}{$PoA$}
            \If{logs[PoA.rid][PoA.bid] != null}
                \State $logs[PoA.rid][PoA.bid].PoA\_sent \gets true$ \Comment{The batch gets PoA}
                \State $PoA\_heads[PoA.rid] \gets \max(PoA\_heads[PoA.rid], PoA.bid)$
            \Else
                \State BroadcastRequestBatch($PoA.rid$, $PoA.bid$)
            \EndIf
        \EndFunction
  
    \end{algorithmic}
\end{algorithm}

Algorithm~\ref{code:broadcast} shows the replication process. Each replica maintains its own log of locally generated batches and also tracks a global view ($logs$) of the logs of batches. The indices of these logs correspond to the batch IDs ($bid$) from their source replicas (the replica that created the batch). Each batch progresses through four states: \textit{broadcast}, \textit{available}, \textit{PoA\_sent}, and \textit{committed}. Figure~\ref{fig:log} illustrates this multi-log view from the perspective of Replica A.

\begin{figure}[h]
    \centerline{\includegraphics[scale=0.8] {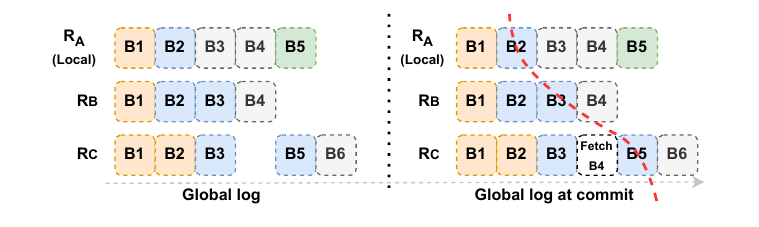}}
    \caption{Replica A's view of the logs: Brown means committed, blue means available and has PoA, green means broadcast but without PoA, and white means received but not available. The red dotted line is the coordinator's consistent cut.}
    \label{fig:log}
\end{figure}

Upon receiving a batch from a peer, a replica appends the entry to the corresponding source replica's log, marks it as \textit{broadcast}, and sends an acknowledgment ($ReceivedBatch()$). The source replica aggregates these acknowledgments in $ReceivedAcknowledgement()$. After securing a quorum of $f+1$ acknowledgments, the batch transitions to the \textit{available} state (Line~\ref{line: available}). At this point, the source replica broadcasts a \textit{Proof of Availability} (PoA), but only when all preceding batches in the sequence have already reached the \textit{available} state and issued their own PoAs. This constraint ensures that availability is certified only for contiguous log prefixes. The PoA serves as a guarantee that the batch is persisted on at least one correct replica, satisfying the $f+1$ availability requirement.

The PoA generation process (Line~\ref{line: poa}) iteratively scans the log from the current $PoA\_head$ (the index of the last batch with a transmitted PoA) up to the newly available batch. Again, to ensure log contiguity, a PoA is broadcast for a batch only if all its predecessors are also marked \textit{available} and the $PoA\_head$ is advanced. Upon receiving a PoA ($ReceivedPoA()$), peer replicas update the corresponding batch's state to $PoA\_sent$. In addition, if a replica receives a PoA for a batch that is absent from its local log, it broadcasts a fetch request to retrieve the missing batch.

In the example shown in Figure~\ref{fig:log} (left), batch $B5$ on Replica A is marked $available$ after securing $f+1$ acknowledgments. However, its predecessors, $B3$ and $B4$ remain pending. Thus, the contiguity constraint prevents Replica A from sending a PoA for $B5$ until the gaps are filled. Note that the $available$ state is internal to the source replica (which tracks quorums); peer replicas only observe remote batches in the $broadcast$, $PoA\_sent$, or $committed$ states.

\subsection{Global Epoch Commit}
\label{sec:commit}

\begin{algorithm}[h]
    \fontsize{9pt}{10pt}\selectfont
    \caption{Consistent Cut}\label{code:consensus}
    \begin{algorithmic}[1]
        \State var \textbf{\textit{last\_committed}} : \{$rid$ : int -> int\}

        \Function{StartConsensus}{}  \Comment{At the coordinator}
            \Loop \ periodically
                \State ProposeValue($PoA\_heads$)
            \EndLoop
        \EndFunction

        \Function{AgreementReached}{$cut\_bids$} \Comment{At all replicas}  
            \State var \textbf{\textit{batch\_to\_commit}} : list of batches $\gets []$

            \ForAll{$rid$}
                \For{$i \gets last\_committed[rid]$ to $cut\_bids[rid]$} \label{line:getbatches}

                    \If{$logs[rid][i] == null$}
                        \State $logs[rid][i] \gets $ BroadcastRequestBatch($rid$, $i$)
                    \EndIf

                    \State $batch\_to\_commit.append(logs[rid][i])$
                \EndFor
            \EndFor

            \State Validation($batch\_to\_commit$)
        \EndFunction

    \end{algorithmic}
\end{algorithm}

Periodically (every \textit{epoch}), a designated \textit{coordinator} replica (typically the Raft leader) initiates a consensus round triggered by a local timer ($StartConsensus()$). The coordinator proposes a "consistent cut" of the global log, consisting of the vector of batch indices currently marked as \textit{PoA\_sent} in its local view ($PoA\_heads$). Standard consensus protocols, such as Raft~\cite{diegoRaft2014}, are used for this proposal. Unlike systems that depend on synchronized physical clocks or centralized sequencers for ordering~\cite{thomsonCalvinFastDistributed2012,fanOceanVistaGossipbased2019, luEpochbasedCommitReplication2021, zhougeogauss2023}, \scmacro{} reaches agreement directly on the set of available batches. This approach establishes a consistent global checkpoint without assuming tight clock synchronization or imposing hard real-time constraints.

Given the monotonic nature of $bid$, this consensus step yields a uniform agreement on a specific vector of indices, effectively making a "cut" across the global logs. The consistent cut deterministically identifies the commit scope for the current epoch: specifically, all batches positioned between the previously established commit frontier and the new cut (Line~\ref{line:getbatches}). Because the inclusion of a $bid$ in the cut implies the existence of a PoA, and by extension, the availability of all preceding batches, the integrity of the log sequence is ensured. Due to the $f+1$ availability enforced by the PoA, any gaps in the log remain retrievable from at least one correct replica.

Figure~\ref{fig:log} (right) illustrates a consistent cut from the perspective of Replica A (also the coordinator). It identifies the most recent batches with PoAs: batch $B2$, batch $B3$ from Replica B, and batch $B5$ from Replica C. Based on this frontier, it proposes the vector $\{2, 3, 5\}$ as the global commit boundary for the current epoch. Notably, although Replica A has not yet received batch $B4$ from Replica C, the existence of a PoA for $B5$ transitively guarantees that all preceding batches, including $B4$, are safely persisted on a quorum. Replica A will therefore attempt to retrieve the missing batch.

The requirement of PoA's predecessor availability (\S\ref{sec:replication}) is a deliberate design choice for two reasons. First, it minimizes consensus overhead: by guaranteeing log contiguity, replicas only need to agree on a compact vector of log heads (the "cut"), rather than arbitrary lists of batch IDs. Second, and more importantly, this constraint enforces a commit order that mirrors the original transaction order. Preserving this sequence maintains the integrity of the local dependency chains established during execution, which significantly simplifies the logic required to enforce serializability.

\subsection{Validation and Deterministic Re-execution}
\label{sec:valid}

\begin{algorithm}[h]
    \fontsize{9pt}{10pt}\selectfont
    \caption{Validation and Commit}\label{code:commit}
    \begin{algorithmic}[1]
        \State var \textbf{\textit{current\_eid}} : int \Comment{Index of the current epoch}
        
        \Function{Validation}{$batches$}
            \State var (\textbf{\textit{invalid\_txns}}, \textbf{\textit{valid\_txns}})  = \\FilterConflicts($batches$) \Comment{See \S~\ref{sec:filter}}
            \State $durable\_storage.update(valid\_txns.write\_sets)$
            \State Sort $invalid\_txns$ based on the deterministic rules
            \State Execute $invalid\_txns$ with a deterministic lock
            \State Update $eid$ for all written items with $current\_eid$
            \State $current\_eid \gets current\_eid + 1$
        \EndFunction

    \end{algorithmic}
\end{algorithm}

After collecting the $batches\_to\_commit$, replicas independently complete the validation and re-execution phase without further coordination.

We define two types of invalid transactions. The first type is \textit{stale transactions}: those that read a data version from a consistent snapshot that has been superseded by a committed update prior to the current epoch. We use the epoch ID ($eid$) as the write version to detect stale reads (Line~\ref{code:stale} in Algorithm~\ref{code:filter}). In addition, we enforce cascading invalidation: any transaction that depends on a previously invalidated transaction, even across epochs, is recursively marked as stale. The second type is conflicting transactions, which include read-write or write-write conflicts in the current epoch. 

Conflict detection operates at the granularity of \textit{Transaction Chains} (TxnC), as defined in Section~\ref{sec:local}. Under this model, chains are treated as atomic units: a conflict involving any single transaction invalidates the entire chain. We adopt this coarse-grained approach because fine-grained resolution, such as dynamically identifying and pruning only specific conflicting dependencies, incurs computational complexity that often outweighs the performance benefits of salvaging a small subset of transactions, particularly when the cascade of dependent aborts is unpredictable~\cite{dongFineGrainedReExecutionEfficient2023}. An important property to note is that transaction chains contain only transactions with the same source replica, and conflicts can only arise between chains from different replicas, because OCC already ensures local serializability within batches from the same replica. In order to reduce the number of conflicting transactions that need to be re-executed, we employ a graph-based conflict detection algorithm as described in Section~\ref{sec:filter}.

Following conflict resolution, the write-sets of all valid transactions are committed to durable storage. A key aspect of our algorithm is to \textit{deterministically re-execute} the conflicting transactions instead of aborting them. To ensure serializability, stale and conflicting transactions are first arranged into a total order determined by their $tid$ and the predefined priority of their source replica. This queue is then executed using a deterministic locking manager, derived from Calvin~\cite{thomsonCalvinFastDistributed2012}, to enforce serializable execution according to the sorted order while maintaining concurrency. Deterministic locking requires prior knowledge of the read and write sets, which are available in our design because they are tracked during the initial OCC phase. Once the results of the re-executed transactions are persisted, the system responds to clients with the results.

\subsection{Detecting Conflicting Transactions}
\label{sec:filter}
Given a batch of transactions, we must efficiently identify write-write and read-write conflicts among them to ensure serializability~\cite{bernsteinconcurrency1981,ceriOCC1982}. The conflict detection process must be \textit{deterministic}, ensuring that every replica independently reaches the exact same commit decision; \textit{optimal}, minimizing the number of re-executions; and \textit{computationally efficient}, ensuring that the validation logic does not become a bottleneck in the commit critical path.

We formulate this optimization as finding the maximum weight independent set~\cite{liangmwis1991} on a \textit{conflict graph}~\cite{bernsteinconcurrency1981}. In this model, each vertex represents a transaction chain, assigned a weight corresponding to the number of transactions within. An undirected edge links any two vertices if a transaction in one chain exhibits a data conflict with a transaction in the other. The objective is to identify the independent set (a subset of non-adjacent vertices) with the highest weight. This set corresponds to the maximum number of transactions that can be safely committed in parallel. By maximizing the weight of the committed set, the system effectively minimizes the number of transactions to be re-executed.

\begin{algorithm}[h]
    \fontsize{9pt}{10pt}\selectfont
    \caption{Conflict Graph Construction and Conflict Detection}\label{code:filter}
    \begin{algorithmic}[1]
        \Function{CreateConflictGraph}{$batch$}
            \State var \textbf{\textit{CG}} : graph
            \State var \textbf{\textit{write\_tracker}} : {data item : unique set of transactions}
            \State var \textbf{\textit{read\_tracker}} : {data item : unique set of transactions}
            \State var \textbf{\textit{stale\_txn}} : []

            \State Group dependent $txn$s from the same batches into \text{transaction chains} $txnc$.

            \ForAll{$txnc$ in $batch$} 
                \If{$\exists item \in txnc.read\_set$ s.t $item.eid < current.eid$ or $\exists txn \in txnc$ s.t $txn.prev$ is aborted} \Comment{Check for staleness or cascading aborts} \label{code:stale}
                    \State $stale\_txn.append(txnc)$
                    \State continue
                \EndIf

                \ForAll{$item$ in $txnc.write\_set$}
                    \State $write\_tracker[item].add(txnc)$
                \EndFor

                \ForAll{$item$ in $txnc.read\_set$}
                    \State $read\_tracker[item].add(txnc)$ 
                \EndFor
                \State $CG.add\_vertex(txnc)$ 
            \EndFor

            \ForAll{$key, value$ in $write\_tracker$} 
                \If{$value.size > 1$} \Comment{Write-write conflicts}
                    \State add edges between all transactions in $value$ to $CG$
                \EndIf
            \EndFor

            \ForAll{$key, value$ in $read\_tracker$} \Comment{Read-write conflicts}
                \If{$write\_tracker[key] \neq null$} 
                    \State  add edges between all transactions in $value$ and write\_tracker[key] to $CG$
                \EndIf
            \EndFor

            \State return $CG$, $stale\_txn$ 
        \EndFunction

        \Function{FilterConflicts}{$batch$}
            \State \textbf{var} $CG, stale\_txns \gets$ \Call{CreateConflictGraph}{$batch$}
            \State \textbf{var} $valid\_set \gets$ \Call{SolveMWIS}{$CG$} 
            \State \textbf{var} $conflicting\_set \gets CG.vertices \setminus valid\_set$
            
            \State \textbf{return} $(conflicting\_set \cup stale\_txns, \ valid\_set)$
        \EndFunction
    \end{algorithmic}
\end{algorithm}

\begin{figure}[h]
    \centerline{\includegraphics[scale=0.8] {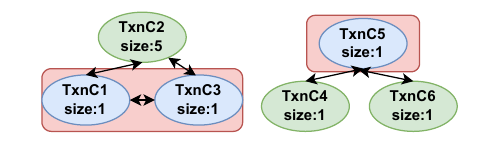}}
    \caption{Conflict graph of 6 transaction chains, arrows mean read-write or write-write conflicts, and the red box shows the conflicting transaction chains ($TxnC$) that are invalidated and to be re-executed.}
    \label{fig:cg}

\end{figure}

Algorithm~\ref{code:filter} shows the approach. We first construct the conflict graph ($CreateConflictGraph()$) through an iterative scan of all write-sets, linking transaction chains that have read-write or write-write contention on shared data items. Subsequently, the resulting graph is processed by the MWIS solver to identify the optimal subset of non-conflicting transactions. The chains that are excluded from the independent set are designated as conflicting. In cases where multiple optimal solutions exist, the solver may employ a deterministic tie-breaking rule, such as prioritizing the transaction chain with the lowest starting $tid$.

An example is shown in Figure~\ref{fig:cg}, where $txnc2$ represents a heavy chain containing 5 transactions, while all other chains ($txnc1$, $txnc3$--$txnc6$) each contain a single transaction. $txnc1$, $txnc2$, and $txnc3$ are in conflict because they may all have accessed the same data item. However, the MWIS solver should remove both $txnc1$ and $txnc3$ because $txnc2$ has the highest weight. For $txnc4$, $txnc5$, and $txnc6$, a possible scenario is that there are two data items, $a$ and $b$: $txnc4$ writes to $a$, $txnc5$ writes to both $a$ and $b$, and $txnc6$ reads from $b$. In this case, removing $txnc5$ is optimal.

Another issue to consider is cross-epoch and cross-batch dependent transactions. For cross-epoch dependencies, if a transaction depends on other transactions but they are not committed in the same epoch, then the transaction in the second batch must be derived from the previous consistent snapshot, so it is treated the same as a stale transaction. For dependencies across batches but still within the same epoch, the transactions are placed in the same global batch, and the links between the transaction chains across batches can be used to merge them into a single chain when constructing the conflict graph.

\subsection{Correctness}
We show that \scmacro{} guarantees One-Copy Serializability (1SR)~\cite{bernsteinconcurrency1981} under crash failures. That is, all replicas observe the same serializable execution order of transactions. A transaction is defined as a series of read, write, or delete operations, where each operation accesses only a single data item. We do not assume any range queries or transactions with unbounded read/write sets, which can be supported with additional mechanisms such as index or partition versioning, or client-side locks~\cite{levandoskiMultiVersionRangeConcurrency2015}.

\begin{lemma}~\label{l:poa}
If a Proof of Availability (PoA) is received by a correct replica, then at least one correct replica has received and stored the corresponding batch.
\end{lemma}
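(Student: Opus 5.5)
The plan is to trace the PoA back to its origin and count acknowledgments with a pigeonhole argument. In Algorithm~\ref{code:broadcast}, a PoA for batch $(rid,bid)$ is created only inside $ReceivedAcknowledgement()$ at the source replica $rid$, and only for indices whose batch has been marked \emph{available} (Line~\ref{line: available}). Since failures are crash-only, no replica fabricates messages. Channels are reliable and neither create nor duplicate messages. So if a correct replica receives a PoA for $(rid,bid)$, the source replica must have executed BroadcastPoA($bid$) while correct. Reaching that call required $local\_log[bid].available = true$.

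The next step unpacks what \emph{available} implies. A batch is marked available only when $bid\_acks[bid] \ge f+1$. Acks are counted after duplicate senders are filtered, as the algorithm's comment assumes, so this means $f+1$ distinct replicas sent an $ack$ for $bid$. An $ack$ is sent only from $ReceivedBatch()$, and only after executing $logs[rid][batch.bid] \gets batch$, i.e.\ after the batch is stored. Per \S\ref{sec:local}, batches are also persisted to the log, so "stored" survives crashes. Hence at least $f+1$ distinct replicas have received and stored the batch.

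The pigeonhole step finishes the argument. At most $f$ of the $n \ge 2f+1$ replicas are faulty, so among the $f+1$ distinct acknowledgers at least one is correct. That replica has received and stored the batch, which proves Lemma~\ref{l:poa}. Note that the contiguity condition on PoAs is not needed here; only the quorum threshold matters.

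The main obstacle is making precise the assumptions the algorithm leaves implicit. First, acks must come from distinct replicas; I would cite the deduplication comment. Second, the source may count its own ack, or not. This changes nothing, provided the source also stores its batch locally before counting it, which holds because the local log is persisted. Third, storage must be durable, so that a replica that acknowledged and later crashes is simply counted among the $f$ faulty ones, without breaking the argument. I would state these points explicitly and then present the three-step chain: PoA received $\Rightarrow$ source marked available $\Rightarrow$ $f+1$ distinct storers $\Rightarrow$ one correct storer.
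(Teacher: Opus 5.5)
Your proposal is correct and follows the paper's argument: a PoA is issued only after $f+1$ acknowledgements, each sent after the batch is stored, and since at most $f$ replicas crash, at least one acknowledger is correct. Stating the distinct-sender and no-fabrication assumptions explicitly is a useful refinement. The durability assumption, however, is unnecessary: a correct replica never crashes, so it keeps the batch whether or not storage is durable, and crashed acknowledgers count among the $f$ faulty ones regardless.
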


\begin{proof}
    A PoA is only generated when a batch has received $f+1$ acknowledgements (Line~\ref{line: available} of Algorithm~\ref{code:broadcast}). Since the system can tolerate up to $f$ crash failures, at least one replica that has persisted the batch must be correct.
\end{proof}

\begin{lemma}~\label{l:agree}
At each epoch, all correct replicas observe the same set of batches in the same order.
\end{lemma}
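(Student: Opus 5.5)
The plan is to reduce the statement to three ingredients: the safety of the consensus protocol used for the consistent cut, the contiguity of PoAs, and Lemma~\ref{l:poa}.

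\emph{Same cut.} The cut for each epoch is decided by a standard crash-fault-tolerant consensus instance (Raft) in $StartConsensus()$ and $AgreementReached()$. Because $n \ge 2f+1$, Raft's agreement property ensures that the $k$-th decided value $cut\_bids_k$ is the same vector at every correct replica. Raft's log-matching property ensures that every correct replica applies these decisions in the same order. By induction on the epoch index $k$, every correct replica therefore holds the same $last\_committed$ vector at the start of epoch $k$: the base case is the all-zero initial vector, and the step is the update to $cut\_bids_k$. So every correct replica computes the same index range $[last\_committed[rid], cut\_bids_k[rid]]$ for each source replica $rid$.

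\emph{Same batches.} I must show that for each $rid$ and each index $i$ in that range, every correct replica obtains the batch $logs[rid][i]$ and that the content is identical everywhere. The coordinator only proposes values from $PoA\_heads$. A PoA for batch $i$ is broadcast only after all predecessors $j<i$ of that source's log have become available and sent their own PoAs (the contiguity check at Line~\ref{line: poa}). Hence every batch in the range has had a PoA. By Lemma~\ref{l:poa}, each such batch is stored at some correct replica. By the reliability and partition assumptions, the fetch in $BroadcastRequestBatch$ eventually returns it. For identical content, I note that each $(rid, bid)$ pair is produced exactly once by its source replica, because $current\_bid$ is incremented after every broadcast. Batches are never modified after creation, so any copy a replica holds or fetches is the same batch. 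The subtle case is a source that crashes after broadcasting. This is still fine, because only one batch was ever created for that slot.

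\emph{Same order.} $AgreementReached()$ builds $batch\_to\_commit$ by iterating over $rid$ in a fixed, globally known order, and within each $rid$ in increasing $bid$. This procedure is deterministic, so it produces identical sequences from identical inputs.

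I expect the main obstacle to be the availability step for intermediate batches a replica never received. This step relies on PoA contiguity to extend Lemma~\ref{l:poa} from the head batch to all of its predecessors. It also needs the claim that a correct replica holding the batch will eventually answer the fetch, which uses the assumption that channels are reliable.
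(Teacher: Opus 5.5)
Your proposal is correct and follows essentially the same route as the paper's proof. Both arguments use three ingredients: PoA contiguity from Algorithm~\ref{code:broadcast} combined with Lemma~\ref{l:poa}, so that every batch up to the cut is retrievable; Raft agreement, so that every correct replica adopts the same cut; and the monotone batch indices, so that the order is fixed. You go further than the paper in making explicit the induction on $last\_committed$, the uniqueness of the content in each $(rid, bid)$ slot, and the fixed iteration order over $rid$, but these fill in details rather than change the argument.
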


\begin{proof}
The coordinator only proposes the consistent cut with indices of batches that have received a PoA. Furthermore, a replica only broadcasts a PoA for a batch at index $i$ if all predecessor batches (indices $< i$) are already marked as available and have sent their PoAs (Line~\ref{line: poa} of Algorithm~\ref{code:broadcast}). Therefore, by induction, if the coordinator observes a PoA for index $j$ (the cut index), then it is guaranteed that PoAs (and thus availability) exist for all indices $0, 1, \dots, j-1$.

The consensus protocol (e.g., Raft) guarantees that the consistent cut indices are agreed upon by all correct replicas in the same order. Since any batches with indices less than the cut index are guaranteed to be available by Lemma~\ref{l:poa}, all these batches are guaranteed to be available. Because the batches are uniquely identified by monotonically increasing indices, their order will be identical across all correct replicas.
\end{proof}

\begin{lemma}~\label{l:ex}
At each epoch, the same set of transactions is ordered to be executed in a serializable manner among all correct replicas.
\end{lemma}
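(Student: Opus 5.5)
We would prove Lemma~\ref{l:ex} in two parts. The first part shows that all correct replicas hold the \emph{same set} of transactions at the epoch and \emph{partition it identically} into a valid set and a re-execution queue. The second part shows that the resulting schedule is serializable.

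\textbf{Same input and same partition.} By Lemma~\ref{l:agree}, every correct replica gathers the same ordered list of batches for the epoch. Each batch carries the $TxnRecord$s (inputs, read-sets, write-sets, dependencies) fixed at its source, so every replica sees identical $TxnRecord$s. We then argue, by induction on epochs, that every correct replica enters the epoch with the same committed state, including the same $eid$ stamps on items and the same record of which chains were previously aborted. The base case is the initial snapshot. The inductive step is the conclusion of this lemma applied to earlier epochs, together with determinism of re-execution (below).

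With identical input and identical state, every step of $CreateConflictGraph()$ is a deterministic function:
\begin{itemize}
\item grouping transactions into chains from dependency links;
\item the staleness test at Line~\ref{code:stale}, which reads only $eid$s and the abort history;
\item edge construction from the trackers.
\end{itemize}
$SolveMWIS()$ is made deterministic by the fixed tie-breaking rule (lowest starting $tid$). Hence $valid\_txns$ and $invalid\_txns$ coincide on all correct replicas. Sorting $invalid\_txns$ by ($tid$, source-replica priority) gives the same total order everywhere. Calvin-style deterministic locking, with read/write sets known in advance, then yields the same equivalent serial order and the same writes on every replica.

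\textbf{Serializability.} We exhibit an explicit serial order. Within the valid set, each chain is internally serializable in its local OCC order, since OCC guarantees local serializability and chains come from a single source replica. We must check that chains in the independent set commute. By construction, no two chosen vertices share a write-write or read-write edge. Every read of a valid chain either hits the consistent snapshot with an up-to-date $eid$ (non-stale) or hits a write by an earlier transaction in the same chain. Therefore any concatenation of the valid chains, say ordered by source replica and then $tid$, produces the same reads and the same final writes as the parallel install of their write-sets.

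Next we append the re-executed transactions in their sorted order. Deterministic locking makes their execution equivalent to that serial order, starting from the state after the valid writes. Each re-executed transaction recomputes its reads from current state, so it is consistent with that prefix. Finally, the $eid$ update guarantees that the next epoch's staleness check sees these writes.

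\textbf{Main obstacle.} The hardest step is justifying that the valid chains really do not interact. This requires that every intra-epoch dependency places the reader in the same chain as its writer, including cross-batch dependencies merged into a single chain. It also requires that a stale or cascading-abort chain never remains in the graph. I would handle this with a case analysis on the source of each read in a valid transaction (snapshot vs.\ temporary state). In each case, I would show that the value read equals the value produced by the proposed serial order.
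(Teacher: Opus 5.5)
Your proposal is correct and follows the same route as the paper's proof: identical transaction sets via Lemma~\ref{l:agree}, a deterministic conflict-graph and MWIS partition into a conflict-free valid set and a re-execution set, deterministic locking for the latter, and the serial order ``valid chains first, then re-executed transactions.'' Where you go beyond the paper is in rigor rather than strategy. You make explicit that the staleness test is deterministic only because replicas enter each epoch with identical $eid$ stamps and invalidation history; the paper's lemma proof takes this for granted and only secures it in the induction of Theorem~\ref{l:serial}. You also justify serializability of the valid chains by an explicit serial order and a case analysis on read sources, instead of just citing that conflict-free transactions are serializable.
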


\begin{proof}
By Lemma~\ref{l:agree}, all correct replicas observe the same set of batches to commit, and thus the same set of all transactions $S_{all}$. The $CreateConflictGraph()$ function in Algorithm~\ref{code:filter} is deterministic, as it detects all read-write and write-write conflicts among transactions (chains). The MWIS solver is also deterministic and returns the same optimal set of transactions $S_{opt}$ without read-write or write-write conflicts at each replica. Thus, there is no dependency among these transactions, and they can be considered serializable~\cite{thomasianDistributedOptimisticConcurrency1998}.

The remaining transactions $S_{re}$ are re-executed using a deterministic locking mechanism. By the definition of deterministic locking~\cite{thomsonCalvinFastDistributed2012}, $S_{re}$ is also deterministically serialized among all replicas.

Since $S_{re}$ is always executed after $S_{opt}$, and $S_{re} \cup S_{opt} = S_{all}$, the execution order of $S_{all}$ is serializable.
\end{proof}

\begin{definition}
Define a single-copy serial history $S_{ser}$ consisting exactly of all transactions committed from epoch $1$ to $e$, such that executing $S_{ser}$ on an initial database state of any replica produces the same final database state.
\end{definition}

\begin{theorem}~\label{l:serial}
\scmacro{} guarantees One-Copy Serializability (1SR).
\end{theorem}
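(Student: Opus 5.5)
The plan is an induction on the epoch number $e$. The invariant is that after epoch $e$ every correct replica holds the same database state, and that this state equals the result of executing a single-copy serial history $S_{ser}^{(e)}$ (in the sense of the Definition) on the initial state. The base case $e=0$ is immediate: all replicas start from the same initial state and the history is empty.

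For the inductive step, assume every correct replica holds the same consistent snapshot $D_{e-1}$, which equals the outcome of $S_{ser}^{(e-1)}$. By Lemma~\ref{l:agree}, every correct replica assembles the same ordered set of batches for epoch $e$, and hence the same transaction set $S_{all}$. By Lemma~\ref{l:ex}, they all compute the same partition $S_{all} = S_{stale} \cup S_{opt} \cup S_{re}$.

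Next I would construct the serial order for epoch $e$. It places $S_{opt}$ first, with chains ordered by source-replica priority and $tid$ and transactions inside a chain in their local execution order. It then appends $S_{re}$ in the deterministic locking order. The key claim is that applying the buffered write-sets of $S_{opt}$ to $D_{e-1}$ gives the same state as serially executing those transactions from $D_{e-1}$. Every transaction in $S_{opt}$ passed the staleness check (Line~\ref{code:stale} of Algorithm~\ref{code:filter}), so each snapshot read returns a value whose $eid$ is current, i.e.\ the value in $D_{e-1}$. Every read inside a chain returns a value written by an earlier member of the same chain, and chains are kept whole, so cascading invalidation leaves no dangling dependency. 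Because $S_{opt}$ is an independent set in the conflict graph, no chain in it writes an item that another chain in it reads or writes. Therefore, in the serial order, each transaction sees exactly the values it saw during optimistic execution, and the write-sets commute.

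The transactions in $S_{re}$ are then re-executed from their original inputs on top of $D_{e-1}$ updated by $S_{opt}$. Deterministic locking makes this equivalent to serial execution in the agreed total order, and it is identical on every replica. Their optimistic results are discarded, so they behave as fresh transactions, and no previous-epoch value is stale for them. Setting $S_{ser}^{(e)} = S_{ser}^{(e-1)} \cdot S_{opt} \cdot S_{re}$ and updating every written $eid$ to $e$ therefore preserves the invariant, including the $eid$ metadata needed for the next epoch's checks.

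The main obstacle is the commutation argument for $S_{opt}$. The difficulty is to show that the read-write edges, together with the staleness test, really exclude every read-from relation that the chosen serial order would violate. This includes transactions that depend on the same epoch but a different batch, which must be merged into a single chain, and cross-epoch dependencies, which are classified as stale. My first approach would be a per-item case analysis: for any item $x$ read by some $T\in S_{opt}$, the last writer of $x$ preceding $T$ in the serial order is either in $T$'s own chain (the same value as during optimistic execution) or lies in an earlier epoch with an unchanged $eid$. Finally, since every committed transaction appears in $S_{ser}$ exactly once and all correct replicas reach the same state, 1SR follows. Crashed replicas that recover replay the agreed cuts and converge to the same state.
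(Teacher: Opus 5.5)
Your proof is correct and follows the paper's route: induction on epochs, with Lemma~\ref{l:agree} giving every correct replica the same transaction set and Lemma~\ref{l:ex} giving the same deterministic order ``$S_{opt}$, then $S_{re}$'' on the common snapshot, and your read-by-read argument for $S_{opt}$ (staleness check, whole chains, independence in the conflict graph, $eid$ metadata carried in the invariant) just unpacks what the paper's Lemma~\ref{l:ex} asserts in one line. One bookkeeping fix: the paper re-executes stale chains together with the conflicting ones, so $S_{stale}$ must go into the re-executed block of $S_{ser}^{(e)} = S_{ser}^{(e-1)}\cdot S_{opt}\cdot S_{re}$; as written it drops out of the history, which contradicts your closing claim that every committed transaction appears exactly once.
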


\begin{proof} 
\textit{Base Case:} Let $E_0$ be the initial epoch; it has no transactions and is trivially serializable.

\textit{Induction Hypothesis:} Assume that up to epoch $E_{k - 1}$, all transactions are executed in a serializable manner.

\textit{Induction Step:} By the induction hypothesis, at the beginning of $E_k$, all replicas must start in the same state (the consistent snapshot). Thus, all transactions in epoch $E_k$ are validated and committed against the same consistent snapshot.

By Lemma~\ref{l:ex}, the commit procedure produces an execution order $S_k$ of all transactions in $E_k$ equivalent to a serial execution at all replicas, such that the state at the end of the epoch is $E_{k} = E_{k-1} + S_k$.

Since the state at the end of epoch $E_{k-1}$ is consistent among all replicas, and $S_{re}$ is serializable, the state at the end of epoch $E_k$ is also consistent among all replicas. Thus, $\sum_{i=1}^{k} S_i$ is $S_{ser}$.
\end{proof}

\section{Implementation}
\label{sec:impl}
We implement \scmacro{}~\cite{minerva} as a distributed transactional key-value store. The system is developed in C\# using .NET 9.0. For the network layer, we employ raw TCP sockets to handle all inter-replica and client interactions. We utilize MemoryPack~\cite{memorypack}, a high-performance binary serializer for the communication protocol.

\paragraph{OCC and Deterministic Locking}
Replicas accept client transaction requests via raw TCP sockets. Upon arrival, requests are dispatched to worker threads managed by the .NET Task Parallel Library for optimistic execution, as detailed in \S\ref{sec:local}. Once local results are batched, the execution task suspends and awaits the global commit signal. We leverage .NET's asynchronous programming model (async/await) to ensure these waits are non-blocking, thereby minimizing context-switching overhead and preventing thread pool exhaustion during high-latency commit windows.

Our deterministic re-execution engine implements a deterministic locking scheduler derived from Calvin~\cite{thomsonCalvinFastDistributed2012}. Transactions are first assigned a global order based on a tuple of the source replica's static priority and the transaction's original ID ($tid$). A centralized lock manager enforces this schedule using per-key wait queues. Since the read and write sets are already known from the initial OCC attempt, lock acquisition is deterministic and deadlock-free. The lock manager maintains a dependency counter for each pending transaction; as active transactions complete and release their locks, the manager processes the relevant queues. When a transaction's dependency counter reaches zero, it is scheduled for execution on a dedicated worker thread.

\paragraph{Transaction Chains Building and MWIS Solving}
As established in \S\ref{sec:valid}, we first need to identify all transaction chains within batches from the same replica. Using the local dependencies that are tracked during the OCC phase, the task is to find all \textit{connected components} within the local dependency graph (distinct from the global conflict graph). This graph treats each transaction as a vertex, with edges linking transactions that share a local dependency. Then, a depth-first traversal partitions the graph into disjoint sets, producing the transaction chains.

Next, we use an open-source C++ graph library~\cite{Solver} that provides both exact and approximate solvers for the MWIS problem. For the exact solution, an integer linear programming (ILP) optimization method~\cite{wolseyIntegerProgramming2020} is used. Although this approach incurs higher computational latency, it guarantees the identification of the optimal independent set and minimizes the volume of transactions to re-execute. The approximate algorithm is based on a popular greedy heuristic~\cite{sakaigreedymwis2003}, which prioritizes decision speed over optimality. The performance trade-offs between these two approaches are analyzed in Section~\ref{sec:skew}.

\paragraph{Fault Tolerance}
\scmacro{} eliminates the need for the blocking 2PC process to ensure atomicity within the replication cluster because of the replication protocol and deterministic validation. Replicas can thus independently converge on the same commit or abort decisions for every transaction without additional coordination. 

The Raft cluster is also used as the configuration manager. The Raft leader assumes the role of the coordinator replica, which is responsible for initiating the periodic global commit process. This also effectively handles coordinator failures, as Raft guarantees that a new leader will be elected to resume the commit process. We use an open-source implementation from the .Next library~\cite{DotNext}.

Finally, we propose the following potential recovery mechanisms to handle replica failures. While failed replicas can recover by fetching the missing batches from other replicas based on the global log, the global log should be persisted on the disk with a write-ahead log, allowing the recovered replica to only gather the missing batches without the need to re-execute all transactions within the epoch. Furthermore, storage nodes can employ a localized primary-backup mechanism with asynchronous backup. Since transactions are always written to the database snapshot first, we do not need to consider the consistency problem with the backups by using epoch as versioning.

\paragraph{High-contention Mode}
We implement an adaptive \textit{high-contention mode} to reduce performance impact under heavily contended workloads. This mode is triggered when the percentage of re-executed transactions exceeds a configurable threshold (default $50\%$) for a sustained duration (default $200$ ms). In this mode, replicas suspend the optimistic execution phase entirely. Instead, incoming transactions are statically probed to extract their read and write sets and are immediately forwarded to the current batch for replication. During the global commit phase, these transactions bypass the conflict detection phase and are automatically routed to the deterministic execution queue.

High-contention mode essentially transforms the system into a pure deterministic scheduler (similar to Calvin) during conflict spikes. Since a pre-computed write set is likely to be discarded, doing so can reduce wasted computational cycles and significantly reduce network overhead, as the propagation of large write-sets is avoided. However, this optimization incurs a latency penalty, as transaction execution must wait until the global commit point.

\paragraph{Garbage Collection}
To prevent unbounded storage growth, obsolete log entries are garbage collected when they are no longer needed, that is after they have been fully processed by the entire cluster. To track this, replicas periodically broadcast their progress (their latest committed epoch). Each node maintains a local view of the cluster's "low watermark," which is the highest epoch index that has been universally committed. Once this watermark advances beyond epoch $i$, replicas safely truncate the log up to that point. In addition, the local $temp\_state$ is cleared at each epoch boundary, effectively garbage collecting the invalid transactions.

\section{Evaluation}
\label{sec:eval}
\subsection{Experimental Setup}
We deploy the \scmacro{} cluster on the Compute Canada Cloud. Each replica operates as a monolithic process, colocating storage and execution components within a single Virtual Machine (VM). Each node is provisioned with 4 vCPUs (Intel Xeon Skylake) and 48 GB of RAM. The operating system is Ubuntu 24.04, except for the GeoGauss baseline, which requires CentOS 7.6. All nodes interact via a high-speed LAN with 3 Gbps bandwidth. To prevent client-side bottlenecks, client threads run on a separate 16-core VM.

Regarding \scmacro{}'s configuration, the epoch interval is set to 15 ms. Batch transmission is triggered by either reaching a size threshold of 4 MB or a timeout of 5 ms, whichever occurs first. Unless otherwise specified, the adaptive high-contention mode is enabled with default settings, and the exact MWIS solver is used.

We compare \scmacro{} against state-of-the-art systems that share key features: (1) Geo-replicated databases, including GeoGauss~\cite{zhougeogauss2023}, Ocean Vista~\cite{fanOceanVistaGossipbased2019}, and CockroachDB~\cite{taftCockroachDBResilientGeoDistributed2020a}; (2) epoch-based systems, including COCO~\cite{luEpochbasedCommitReplication2021}; and (3) deterministic databases, including CalvinDB~\cite{thomsonCalvinFastDistributed2012, CalvinDB}.

For GeoGauss, COCO, and Calvin, we configure all to use the default $10ms$ epoch time as suggested in their papers. We also use COCO's \textit{logical clock} mode, which offers better performance. For a fair comparison, we set all systems for full database replication by setting the number of replicas equal to the number of VM nodes, regardless of their individual sharding capabilities. Finally, all systems are set to in-memory mode, with disk persistency disabled when possible.

The source code for Ocean Vista is not publicly available, so we reimplement its protocol using \scmacro{}'s network and storage layer. While the original design relies on an epidemic gossip protocol for metadata dissemination, we simulate this behavior using a simplified periodic batch broadcast mechanism. Additionally, we replace its reliance on a synchronized clock using Network Time Protocol (NTP) with a logical deterministic ordering scheme derived from local clocks and replica IDs.

Finally, we note that the open-source implementation of GeoGauss supports only Snapshot Isolation (SI). Although this suffices for TPC-C correctness, it represents a weaker guarantee compared to the serializability enforced by \scmacro{} and the other systems.

\subsection{Benchmarks and Metrics}
We evaluate the systems with two widely adopted database benchmarks: YCSB~\cite{cooper2010YCSB} and TPC-C~\cite{TPCC}. YCSB is a simple read-write benchmark for key-value stores. In our experiments, the database consists of 2 million records, with 10-byte keys and 1 KB values. The workload is based on YCSB-A (50\% reads, 50\% writes) where each transaction contains 10 operations (reads or writes) on random keys. Because COCO and Calvin do not natively support blind writes, we adapt the workload for these baselines to perform a read-before-write for every update to accommodate their specific transaction models. Unless otherwise noted, we set the Zipfian skew factor to 0 to simulate a uniform access pattern. This workload stresses the network and data access since each transaction on average contains $5 KB$ of write payloads.

TPC-C is a sophisticated OLTP benchmark with more complex transactions that simulates the operation of a warehouse-based ordering system. We configure a 100-warehouse setup with $50\%$ of \textit{Payment} transactions and $50\%$ of \textit{New-Order} transactions. These are the two primary transaction types in the standard TPC-C benchmark, and this workload is used for evaluating all baselines except for CalvinDB. It is more computationally heavy and exhibits more contention than YCSB, but has smaller payload sizes.

The main performance metrics are \textit{peak throughput (txns/s)} and \textit{end-to-end latency (ms)}. Both metrics reflect only the \textit{successfully committed} transactions on the \textit{client} side. To determine peak throughput, we incrementally scale the client load until system saturation is observed. We then compute the average of the highest sustainable throughput across five 30-second experiment runs. The specific client concurrency is tuned for each configuration to identify the maximum saturation point, ranging from 2000 to 8000 total client threads.

\subsection{Effects of Cross-replica Network Latency}
\begin{figure}[h]
    \centerline{\includegraphics[scale=0.5] {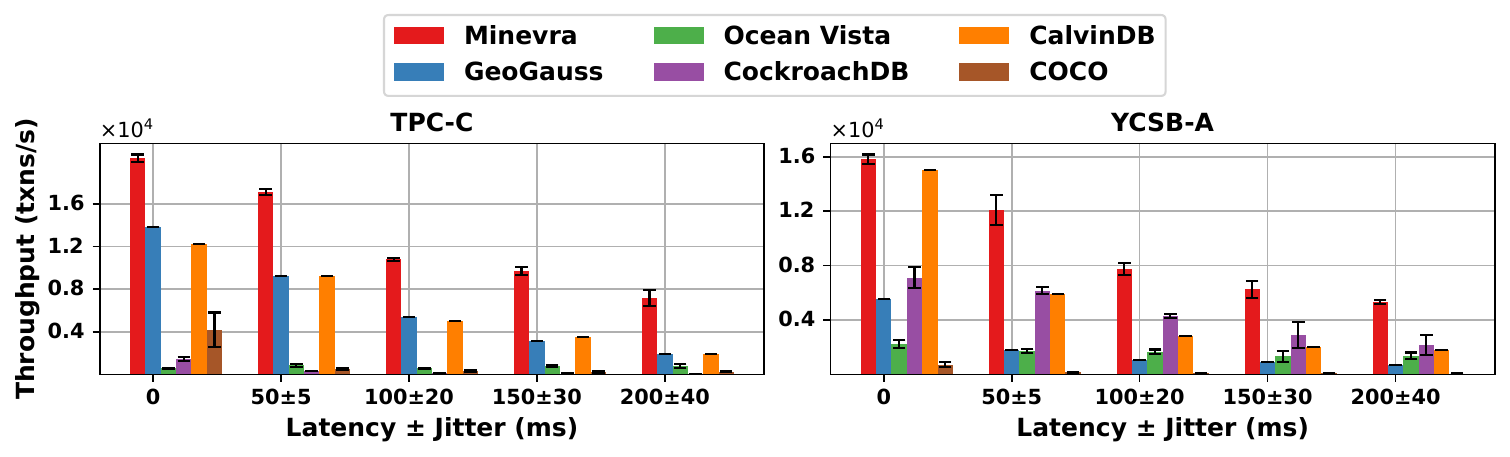}}
    \caption{Throughput vs. Cross-replica Network Latency}
    \label{fig:lt}
\end{figure}

We first evaluate the performance of the systems in geo-distributed scenarios by introducing synthetic network delay among 3 replicas using Linux's \textit{tc} and \textit{netem} tools. We vary the delay from $50ms$ to $200ms$, adding $5ms$ to $40ms$ of jitter to simulate various cross-region deployment conditions.

\paragraph{Throughput} Figure~\ref{fig:lt} illustrates the throughput results. In a low-delay LAN environment, \scmacro{}, CalvinDB, and GeoGauss achieve the highest performance levels, each exceeding $10,000$ txns/s; notably, \scmacro{} surpasses the latter two by approximately $30\%$. In contrast, COCO, CockroachDB, and Ocean Vista exhibit significantly lower throughput, below $5,000$ txns/s, indicating architectural limitations under full replication. For instance, COCO is optimized for sharded execution; this configuration effectively treats every transaction as a cross-partition transaction, requiring 2PC among all replicas and limiting its performance optimizations.

Introducing $50$ ms of cross-replica latency degrades performance globally. At this point, GeoGauss loses $33\%$ of its throughput and CalvinDB loses $40\%$, whereas \scmacro{}'s throughput decreases by only about $15\%$, demonstrating the efficiency of its asynchronous replication and decoupled commit protocol. At $200ms$ delay, \scmacro{} outperforms CalvinDB and GeoGauss by $3.8\times$. 

With YCSB-A, \scmacro{} only has a slight advantage over CalvinDB without network delay. However, as the delay increases, \scmacro{} again shows superior performance, achieving $2.5\times$ the throughput of CockroachDB, the second-best baseline at $200ms$ latency.

We observe a notable divergence in workload sensitivity: \scmacro{}, GeoGauss, and COCO perform worse on YCSB relative to TPC-C, whereas Ocean Vista, CockroachDB, and CalvinDB show improved throughput. This indicates that the former group is highly sensitive to payload composition; specifically, \scmacro{} incurs overhead from the inefficient serialization of YCSB's string payloads. Regarding Ocean Vista, while its gossip protocol makes it less affected by network delay across both benchmarks, its absolute throughput remains low due to high contention, especially in complex transaction scenarios like TPC-C.

\begin{figure}[h]
    \centerline{\includegraphics[scale=0.5] {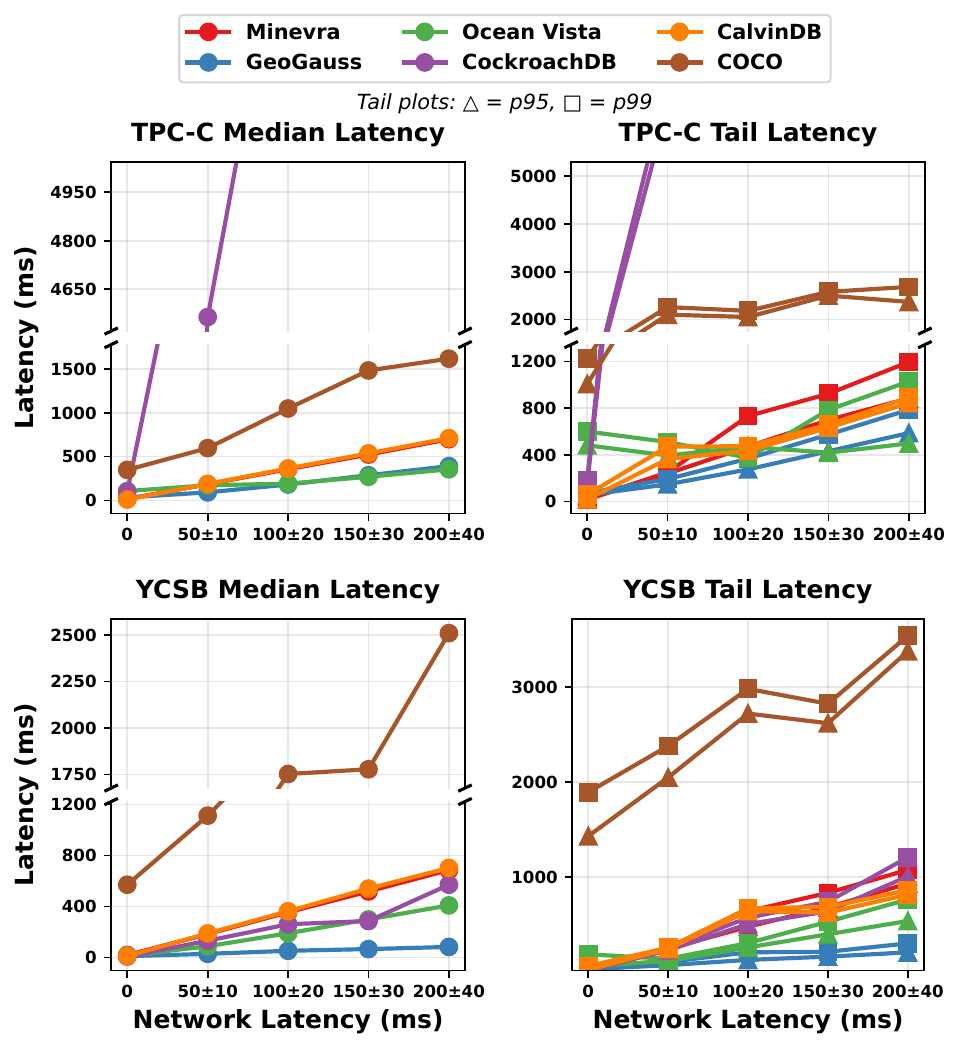}}
    \caption{Transaction Commit Latency  vs. Cross-replica Network Latency}
    \label{fig:ltvnet}
\end{figure}

\paragraph{Latency} Figure~\ref{fig:ltvnet} depicts the end-to-end transaction latency for TPC-C and YCSB-A under low system load, showing the execution latency of individual transactions.

In the TPC-C workload, \scmacro{}, CalvinDB, and GeoGauss exhibit comparable median latencies. For \scmacro{}, this occurs because the consensus delay eventually exceeds the configured epoch time, becoming the new bottleneck, as each commit must wait for the consensus. This extended commit duration also increases the probability of stale transactions, leading to a high re-execution rate, which makes \scmacro{} behave similarly to a deterministic database. For CalvinDB, the increased latency directly impacts its Paxos-based replication time, which is on the critical path. As a result, both systems scale similarly under high network delay.

GeoGauss and Ocean Vista exhibit lower median latencies than \scmacro{} and CalvinDB, effectively achieving single-round-trip commits whenever a replica is able to promptly receive commit information from all other replicas. Transaction commit latency grows linearly with increasing cross-replica latency for all systems except CockroachDB, which is severely affected by network latency among replicas. 

In YCSB, all systems exhibit similar trends, and CockroachDB is comparable to other systems again. Notably, Ocean Vista exhibits elevated tail latencies (p95 and p99); this instability persists due to the protocol's inherent processing overhead. Ultimately, these results showcase the trade-off within \scmacro{}: while the batching mechanism incurs a latency penalty for individual transactions, it preserves robust aggregate throughput, particularly under high network delay.

\subsection{Scalability}

\begin{figure}[h]
    \centerline{\includegraphics[scale=0.5] {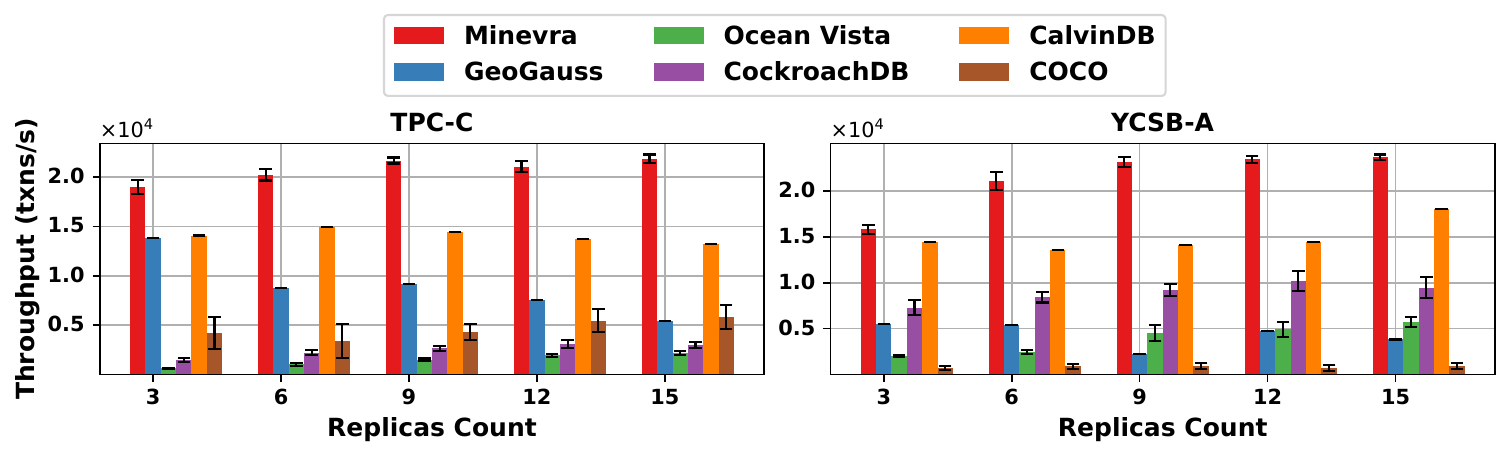}}
    \caption{Scalability in Throughput}
    \label{fig:scale}
\end{figure}

We vary the number of replicas from 3 to 15 and measure the resulting throughput and latency to evaluate scalability.

\paragraph{Throughput} We observe that \scmacro{} maintains a superior scalability profile compared to the competing systems in Figure~\ref{fig:scale}. Across both YCSB-A and TPC-C benchmarks, \scmacro{} outperforms all baselines, observing steady throughput growth up to 12 replicas. At 15 replicas under TPC-C, \scmacro{} achieves approximately $4\times$ the throughput of GeoGauss, COCO, and CockroachDB, and $1.5\times$ that of CalvinDB. Similarly, in YCSB-A, \scmacro{} outperforms CalvinDB by $1.8\times$ and CockroachDB by $2.1\times$. These metrics validate the efficiency of the underlying replication and commit protocols in large-cluster environments.

Notably, GeoGauss is the only system to experience throughput degradation when scaling from 3 to 15 replicas, whereas all others maintain or improve performance. This is likely due to its all-to-all communication in the synchronization phase, which introduces overhead as the cluster expands.

\begin{figure}[h]
    \centerline{\includegraphics[scale=0.5] {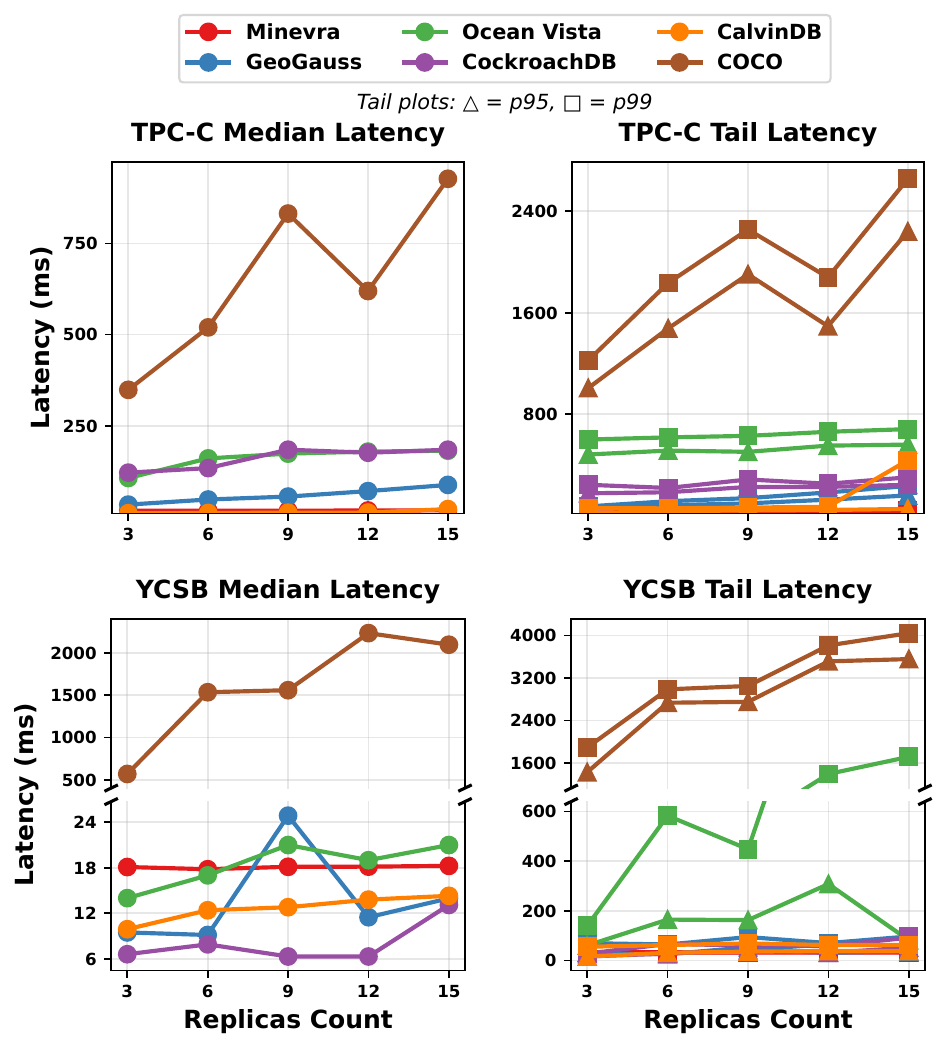}}
    \caption{Scalability in Transaction Commit Latency}
    \label{fig:ltscale}
\end{figure}

\paragraph{Latency} The latencies of individual transactions are shown in Figure~\ref{fig:ltscale}. As expected, \scmacro{}'s latency only increases slightly from $17ms$ to $18ms$ in TPC-C and from $18ms$ to $18.2ms$ in YCSB-A when scaling from 3 to 15 replicas. Most other baselines experience a steady latency except for COCO. Regarding Ocean Vista, we observe erratic fluctuations in tail latency (p95 and p99) despite a stable median, which may stem from the unpredictability of its dissemination protocol.

We see a similar trend in both the scalability and network-latency experiments for CalvinDB and \scmacro{}. This is because both systems rely on deterministic execution. While \scmacro{}'s OCC and optimized replication mechanism offer an edge in raw performance metrics, the trend remains similar, especially under high load, because \scmacro{} behaves more like a deterministic database when contention is high due to the high-contention mode.

\subsection{Contention Analysis}
\label{sec:skew}
\begin{figure}[h]
    \centerline{\includegraphics[scale=0.45] {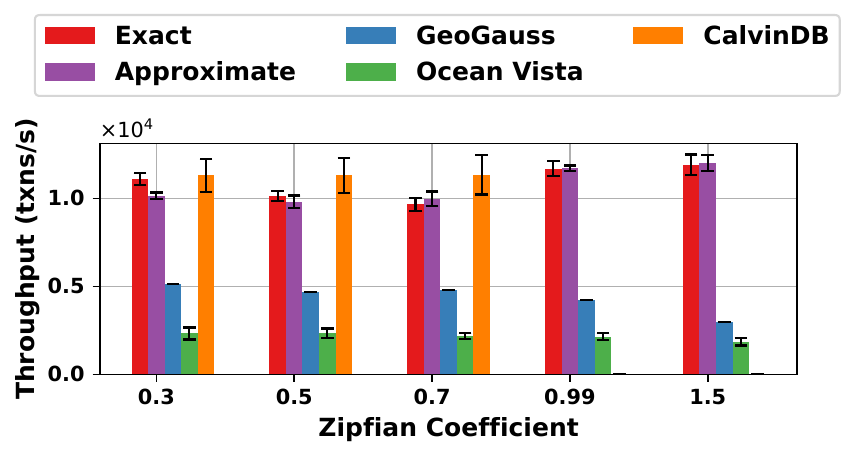}}
    \caption{Varying Contention Coefficients}
    \label{fig:skew}
\end{figure}

Figure~\ref{fig:skew} depicts the peak throughput under different Zipfian coefficients from $0.3$ to $1.7$ to change the contention of YCSB-A workload. This experiment also evaluates the performance trade-offs between the greedy approximate solver and the exact ILP solver in a LAN environment with $3$ replicas. Because CalvinDB does not support setting a Zipfian coefficient for contention, we simulate contention by setting a certain percentage of hot keys in the key space. COCO and CockroachDB~\footnote{CockroachDB only supports single operation YCSB benchmark; the 10-operation YCSB transactions are emulated using the \textit{kv} workload in other experiments, which does not support varying contentions~\cite{cdbworkload}.} are excluded from this experiment because they do not provide configurable contention control.

The exact solver slightly outperforms the approximate solver by a margin of $5\%$ at low contention, because the number of conflicts is relatively small, and the overhead of solving the MWIS problem optimally is outweighed by the benefits of minimizing re-executions. However, as the contention level increases (coefficient above $0.7$), the performance gap narrows, and the approximate solver begins to outperform the exact solver. 

At very high contention levels (coefficient above $1.2$), the overall throughput improves. This resurgence is caused by the frequent triggering of the high-contention mode, which bypasses local OCC to process transactions and proceeds straight to deterministic execution. This strategy effectively eliminates the overhead of solving MWIS on dense conflict graphs, allowing \scmacro{} to minimize scheduling latency and exhibit performance characteristics comparable to CalvinDB.

We observe about 5\% of all transactions are re-executed when the coefficient is set to 0.3, 30\% when the coefficient is 0.7, and maxes out at about 35\% for higher contention because the high-contention mode is triggered, preventing more transactions from being executed twice.

The baseline CalvinDB does not experience any performance change, as expected, since the deterministic execution model is largely unaffected by contention levels. GeoGauss and Ocean Vista both experience a performance drop as contention increases, contributing to a higher abort rate in their optimistic execution models.

\subsection{Fault Tolerance}

\begin{figure}[h]
    \centerline{\includegraphics[scale=0.5] {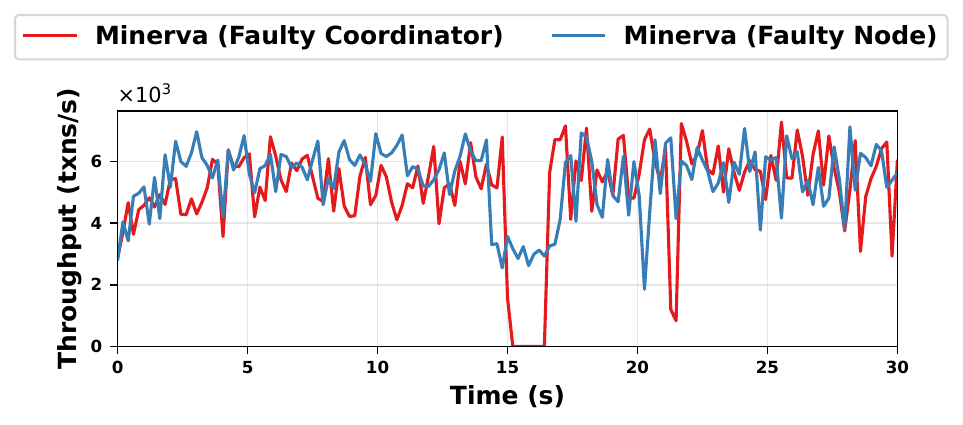}}
    \caption{Throughput under Replica Failures}
    \label{fig:failure}
\end{figure}

We evaluate \scmacro{}'s resilience to replica failures by simulating crash failures in a 5-replica cluster running the TPC-C benchmark. We introduce failures to the coordinator or a random replica at the 15-second mark during a 30-second experiment run by shutting down the server processes. The clients are not connected to the failed replica and the overall system load is set to about 50\% of the peak throughput with 400 clients across the remaining replicas to avoid saturation and better observe the recovery process.

As shown in Figure~\ref{fig:failure}, when the coordinator fails, the commit throughput is temporarily disrupted for about 1 second, which corresponds to the Raft leader election in our implementation ($400ms$ heartbeat). After the new leader is elected, the system resumes normal operation without further performance degradation. When a random replica fails, the system continues to operate with a brief drop in throughput. This is because the Raft protocol in our implementation~\cite{DotNext} waits for a timeout ($50ms$) for all replicas to acknowledge, then checks for a quorum instead of immediately committing as soon as a majority vote is received. However, once the system detects the failure, it removes the failed replica from the cluster configuration, and the throughput recovers to normal levels. These results demonstrate \scmacro{}'s ability to maintain high availability and performance in the face of replica failures, leveraging its underlying Raft-based consensus mechanism for fault tolerance.

\subsection{Performance Breakdown}
We profile \scmacro{}'s commit process with two experiment runs using the YCSB-A workload with 3 replicas, no added latency, and high-contention mode \textit{disabled}. The system is saturated with an average end-to-end transaction latency of about $500ms$ (at this point, the system throughput no longer increases with more clients). We measure a high-contention workload (about $48\%$ conflicting and stale operations) and a low-contention workload (about $2\%$ conflicting and stale operations). 

\begin{table}[h]
\fontsize{2pt}{2.5pt}\selectfont
\centering
\resizebox{\columnwidth}{!}{%
\begin{tabular}{cc|c|c}
\multicolumn{2}{c|}{}                                                                     & Low Contention & High Contention \\ \hline
\multicolumn{2}{c|}{Total Epoch Commit Time}                                              & 32.46          & 32.5            \\ \hline
\multicolumn{1}{c|}{\multirow{5}{*}{\begin{tabular}[c]{@{}c@{}}Conflict \\ Handling\end{tabular}}}  & Constructing Transaction Chains & 0.94           & 0.62            \\ \cline{2-4} 
\multicolumn{1}{c|}{}                                   & Check For Stale Transactions     & 6.84           & 3.28            \\ \cline{2-4} 
\multicolumn{1}{c|}{}                                   & Conflict Graph Construction     & 5.62           & 3.13            \\ \cline{2-4} 
\multicolumn{1}{c|}{}                                   & MWIS Solving                    & 2.27           & 2.16            \\ \cline{2-4} 
\multicolumn{1}{c|}{}                                   & Total Time                      & 17.34          & 10.09           \\ \hline
\multicolumn{2}{c|}{Apply OCC Write Sets}                                                    & 10.56          & 4.18            \\ \hline
\multicolumn{2}{c|}{Re-Execution}                                                         & 0.76           & 17.53           \\ \hline
\end{tabular}%
}
\caption{Breakdown of Average Commit Operations Time (ms)}
\label{tab:breakdown}
\end{table}

Table~\ref{tab:breakdown} presents the average operation-time breakdown for the serial commit process under both low-contention and high-contention scenarios. In both scenarios, the total commit time remains relatively consistent at approximately $32$ ms. However, the distribution of time across different operations varies significantly between the two scenarios.

In the low-contention scenario, the majority of the commit time is consumed by applying the OCC results (i.e., moving the write-sets to storage). This is expected, as most transactions pass the validation phase and are committed directly. Conversely, in the high-contention scenario, the dominant factor in commit time is the deterministic re-execution of transactions. This reflects the high rate of conflicts and subsequent aborts, leading to a substantial portion of the workload being re-executed.

We also observe that time-consuming operations consistently involve accessing the storage layer, even in our in-memory configuration, which only requires memory copying. For instance, checking for stale transactions requires reading the latest database snapshot version for every key in the read-set, and conflict graph construction involves building complex in-memory data structures based on all read and write keys. Surprisingly, the MWIS solving phase consumes a relatively steady, small amount of time across both scenarios, even when using the exact solver. This shows that the overhead is actually memory-bound because of data movement rather than computation-bound by our algorithms.

\subsection{Epoch Time}

\begin{figure}[h]
    \centerline{\includegraphics[scale=0.5] {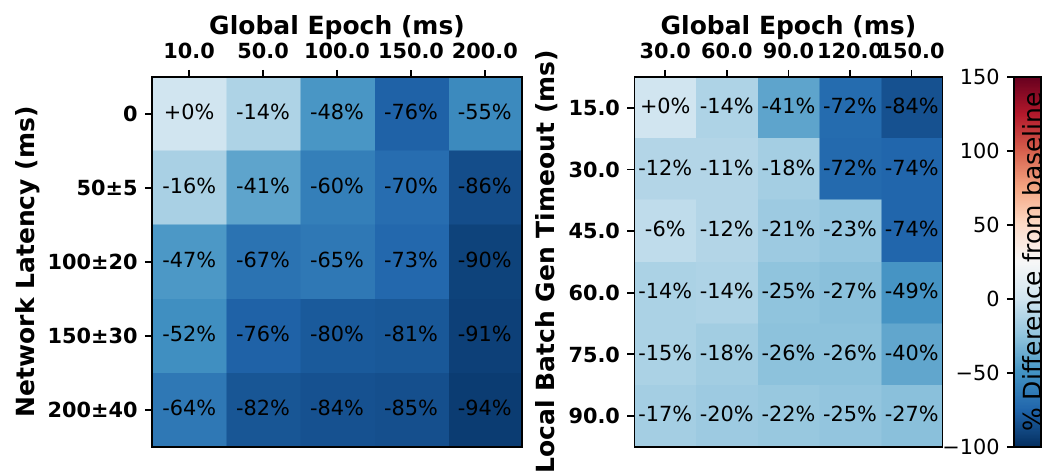}}
    \caption{Effect of Epoch Times and Batch Generation Timeouts with Varying Network Latencies}
    \label{fig:epoch}

\end{figure}

Figure~\ref{fig:epoch} illustrates the impact of varying epoch times on system throughput under different network latencies and batch generation times for the TPC-C workload with 3 replicas. The batch generation timeout is the maximum time a replica waits before broadcasting a batch, even if the batch size threshold has not been met.

Although we expect optimal performance to be gained by matching the epoch time to the network latency, the results indicate that shorter epoch times generally yield better throughput, regardless of network latency. This is because the epoch time only determines the generation frequency of the global commit proposals, while the actual commit is not synchronized to the epoch creation time. Thus, shorter epoch times allow for more frequent commit opportunities, leading to improved throughput. Similarly, a shorter batch generation time also leads to better performance, as it reduces the waiting time for transactions to be included in a batch, thereby improving overall system throughput.

\subsection{Observations and Discussion}
\label{sec:ob}
We observe that \scmacro{} is bottlenecked by the serialized commit process. Even though the commit process requires no coordination, and is asynchronous to execution and replication, it is still on the critical path for the \textit{end-to-end} transaction life cycle and thus establishes a lower bound for latency. This is essentially a producer-consumer problem where the system is limited by the rate at which the serialized commit handler can consume the optimistically produced transaction results. Therefore, regardless of how fast the local OCC execution is, the overall throughput cannot exceed the commit throughput.

We also observe a cascading effect: a slower commit results in more stale transactions because optimistic execution cannot keep up with the latest database snapshot. This, in turn, increases the number of re-executions, which further slows down the commit process. Thus, optimizing the commit process is an important future direction.

A potential improvement is to support weaker isolation and consistency levels, such as in RedBlue Consistency~\cite{li_making_2012} or Janus CRDT~\cite{maoMakingCRDTsNot2024a}. With batch-committed transactions, some conflicting operations can be resolved through commutative operations, such as those provided by Conflict-Free Replicated Data Types (CRDTs)~\cite{shapiroConflictFreeReplicatedData2011}, or by using Last-Writer-Wins (LWW) conflict resolution, rather than resorting to re-execution. Weakly consistent writes also enable a fast-path commit, where transactions can be committed as soon as they gain a PoA because the transaction is guaranteed to be persisted with the PoA.

We can also adapt the approach of Hackwrench~\cite{dongFineGrainedReExecutionEfficient2023}, which uses a predefined dependency graph to reduce the overhead in conflict detection, or utilize Aria's~\cite{luAriaFastPractical2020} transaction reorder mechanism. The trade-off for these approaches, however, is the increased complexity in transaction definition and the computational overhead required to build the dependency graph.

Another drawback of replicating both input and results is the increased bandwidth consumption. This raises cost concerns and could potentially degrade performance if the network connection bandwidth is limited. Compression techniques could be employed to mitigate this issue.

Finally, like Calvin and other deterministic databases, \scmacro{} requires transactions to be predefined with known read and write sets, and thus only supports one-shot transactions for deterministic re-execution. However, we can utilize the OCC executor for interactive transactions by using only the OCC execution path with the same replication and conflict detection mechanism, albeit with the possibility that transactions may be aborted.

\section{Related Work}
\label{sec:related}

\paragraph{Deterministic Concurrency Control}

Calvin~\cite{thomsonCalvinFastDistributed2012} first introduces deterministic concurrency control. Calvin employs a distributed sequencer layer where each database shard is assigned a sequencer to stamp arriving transactions with a global sequence number. Sequencers collect transactions into batches independently, and the commit phase is based on a network-synchronized clock. Finally, transactions in the batch are executed deterministically according to their sequence number. To maximize throughput, concurrent data access during execution is managed by a deterministic locking protocol. While Calvin is not built for multi-leader transaction execution, every replica can still accept client requests and route them to the leader replica for replication via a Paxos-based consensus protocol (implemented with ZooKeeper) to ensure durability before execution.

Aria~\cite{luAriaFastPractical2020} proposes a deterministic execution algorithm that utilizes snapshots for optimistic execution on batches of transactions. This design facilitates high concurrency without requiring a priori knowledge of read/write sets, in contrast to systems relying on deterministic locking or dependency analysis. Aria also has an execution phase and a validation phase, which identifies and aborts conflicting transactions. However, aborted transactions must be rescheduled in a subsequent batch. In high-contention scenarios, this leads to performance degradation and necessitates a synchronous re-execution fallback, a strategy similar to our conflict resolution mechanism. Replication is not part of Aria's protocol as in \scmacro{}.

HDCC~\cite{HongHDCC2025} is a recent work that also combines OCC and deterministic concurrency control by rescheduling aborted transactions with a deterministic scheduler through known read/write sets. Like Aria, HDCC does not consider replication either. Caracal~\cite{qin_caracal_2021} also introduces a similar epoch-batch-based deterministic concurrency control protocol. 

The main distinction between \scmacro{} and these deterministic systems is that \scmacro{} replicates both transaction inputs and the speculative execution results (write-sets). This dual-replication approach not only enables faster commit time for non-conflicting transactions, but also serves as a method for obtaining write-sets in deterministic execution, which otherwise requires pre-declaration or reconnaissance operations.

\paragraph{Epoch-based Replication and Commitment}
GeoGauss~\cite{zhougeogauss2023} is a recent geo-replicated OLTP database built on openGauss~\cite{openGauss} that supports multi-leader replication. Like \scmacro{}, GeoGauss is OCC-based, and replicas exchange their write-sets with all peers at the end of each epoch. These sets are then merged into a final commit set using deterministic, commutative merge rules derived from delta-state CRDTs~\cite{almeida_delta_2018}. GeoGauss resolves conflicts by deterministically aborting one of the contending transactions based on sequence numbers. \scmacro{}'s MWIS-based conflict resolution optimizes this process by reducing the number of aborted/re-executed transactions. 

Additionally, GeoGauss's epoch boundary functions as a global synchronization barrier, requiring every replica to await updates from all other replicas. While it utilizes Raft for configuration management to monitor replicas, such synchronization may cause uncertain blocking delays upon failures.

COCO~\cite{luEpochbasedCommitReplication2021} targets geo-distributed environments but adopts an asynchronous primary-backup model rather than a multi-leader architecture. It relies on sharding to achieve horizontal scalability. In the execution phase, transactions are executed optimistically and read from the nearest replica, generating write-sets that are asynchronously propagated to all replicas. At the boundary of each epoch, a coordinator initiates an optimized 2PC-based protocol to commit the accumulated batch. For validation, COCO relies on physical (through a coordinator) or logical timestamps to allow replicas to independently verify the serializability of transactions. 

Hackwrench~\cite{dongFineGrainedReExecutionEfficient2023} uses a two-tiered, batch-based commitment protocol. For conflict resolution, Hackwrench re-executes aborted transactions based on a predefined dataflow/dependency graph, contrasting with \scmacro{}'s runtime dependency analysis. Hackwrench's replication is achieved through a leaderless quorum-based write to storage nodes, effectively decoupling the transaction execution plane from storage. Neither COCO nor Hackwrench consider multi-leader replication. 

\paragraph{Asynchronous Concurrency Control and Replication}
Ocean Vista (OV) enables multi-leader geo-replication through a unified visibility control protocol~\cite{fanOceanVistaGossipbased2019}. It decouples replication from commitment with a two-phase approach: an initial S-phase uses lightweight consensus to replicate transaction "functors" (logic and inputs) across a quorum, similar to deterministic databases' transaction input replication. Final execution and visibility (akin to actually committing the transactions so clients can see the result) decisions occur in an asynchronous E-phase, where nodes independently apply transactions only after a global gossip protocol establishes a stability watermark, ensuring all prior transactions are replicated. The order of transactions is determined by timestamps synchronized across replicas using NTP, so OV optimizes for visibility latency but is sensitive to clock skew; it still requires consensus to ensure that transactions are replicated properly. 

Epidemic algorithms~\cite{holliday2003epidemic} use only asynchronous gossip protocols to propagate updates and maintain consistency without a centralized coordinator to maximize availability. They rely on causal ordering and local stability verifications to ensure serializability. However, this approach may experience unpredictable convergence times due to the probabilistic nature of message dissemination. Furthermore, ensuring strict consistency requires nodes to await updates from all peers to detect concurrent conflicts, meaning a single slow node or network failure can indefinitely stall progression. 

Mako~\cite{shenMakoSpeculativeDistributed2025} uses a similar speculative execution model and asynchronous replication. It relies on primary-backup replication, but utilizes a deterministic replay model on the replicas to reach consistency rather than replicating the results. The 2PC is optimized by committing speculatively as long as the leader shards agree to commit.

\paragraph{Other Optimized Protocols}
TAPIR~\cite{zhang_tapir_2018} abandons linearizable replication in favor of an inconsistent replication protocol with an OCC-style validation, though it still relies on client execution and client-side 2PC for atomic commitment. The C\&C framework~\cite{maiyyaCNC2019} combines atomic commitment and replication by piggybacking commit decisions onto consensus messages, yet it still requires a separate distributed concurrency control mechanism. Similarly, Janus~\cite{mu_consolidating_2016} combines concurrency control and replication by tracking dependency graphs to detect conflicts, invoking consensus only when cycles exist. 

Compared to \scmacro{}, these systems target different communication models or fault-tolerance guarantees, or a specific layer, rather than a unified protocol for the multi-leader architecture. 

\section{Conclusions}
In this paper, we present \scmacro{}, a novel distributed concurrency control protocol for transactional geo-replicated distributed databases. \scmacro{} leverages the benefits of optimistic execution for scalable parallelism across nodes, combined with a novel, highly concurrent, and fault-tolerant replication protocol. Deterministic re-execution is used to ensure that transaction aborts are minimized without requiring additional coordination among replicas. Our experimental results show that \scmacro{} outperforms state-of-the-art solutions, particularly in high-latency and high-replication-factor scenarios.

\begin{acks}
This work was in part supported by NSERC and ORF. Generative AI tools such as ChatGPT were used to assist in the writing and editing of this paper.
\end{acks}

\newpage
\bibliographystyle{ACM-Reference-Format}
\bibliography{ref}

\received{October 2025}
\received[revised]{January 2026}
\received[accepted]{February 2026}


\end{document}